\newtheorem{theorem}{Theorem}
\newtheorem{definition}{Definition}
\newtheorem{lemma}{Lemma}
\newtheorem{example}{Example}
\long\def\@makecaption#1#2{\ifx\@captype\@IEEEtablestring%
\footnotesize\begin{center}{\normalfont\footnotesize #1}\\
{\normalfont\footnotesize\scshape #2}\end{center}%
\@IEEEtablecaptionsepspace
\else
\@IEEEfigurecaptionsepspace
\setbox\@tempboxa\hbox{\normalfont\footnotesize {#1.}~~ #2}%
\ifdim \wd\@tempboxa >\hsize%
\setbox\@tempboxa\hbox{\normalfont\footnotesize {#1.}~~ }%
\parbox[t]{\hsize}{\normalfont\footnotesize \noindent\unhbox\@tempboxa#2}%
\else
\hbox to\hsize{\normalfont\footnotesize\hfil\box\@tempboxa\hfil}\fi\fi}
\mathchardef\ordinarycolon\mathcode`\:
\begin{document}

\title{Low-Density Parity-Check Codes From Transversal Designs With Improved Stopping Set Distributions}

\author{Alexander~Gruner,~\IEEEmembership{Student Member,~IEEE}, and Michael~Huber,~\IEEEmembership{Member,~IEEE} %
\thanks{Manuscript received June 26, 2012; revised December 4, 2012, and February 23, 2013.
The work of A.~Gruner was supported within the Promotionsverbund `Kombinatorische Strukturen' (LGFG Baden-W\"urttemberg).
The work of M.~Huber was supported by the Deutsche Forschungsgemeinschaft (DFG) via a Heisenberg grant (Hu954/4) and a Heinz Maier-Leibnitz Prize grant (Hu954/5).}
\thanks{The authors are with the Wilhelm Schickard Institute for Computer Science, Eberhard Karls Universit\"at T\"ubingen, Sand~13,
D-72076 T\"ubingen, Germany (corresponding author's e-mail address: michael.huber@uni-tuebingen.de).}}%

\maketitle

\begin{abstract}
This paper examines the construction of low-density parity-check (LDPC) codes from transversal designs based on sets of mutually orthogonal Latin squares (MOLS). By transferring the concept of configurations in combinatorial designs to the level of Latin squares, we thoroughly investigate the occurrence and avoidance of stopping sets for the arising codes. Stopping sets are known to determine the decoding performance over the binary erasure channel and should be avoided for small sizes. Based on large sets of simple-structured MOLS, we derive powerful constraints for the choice of suitable subsets, leading to improved stopping set distributions for the corresponding codes. We focus on LDPC codes with column weight~4, but the results are also applicable for the construction of codes with higher column weights. Finally, we show that a subclass of the presented codes has quasi-cyclic structure which allows low-complexity encoding.
\end{abstract}

\begin{IEEEkeywords}
Low-density parity-check (LDPC) code, binary erasure channel (BEC), stopping set, transversal design, mutually orthogonal latin squares (MOLS).
\end{IEEEkeywords}

\section{Introduction}

\IEEEPARstart{A}{lthough} a big effort has been made to construct LDPC codes achieving the limit of Shannon's coding theorem, it is still a challenging and contemporary task to develop practical codes with an improved decoding performance. For this, there are two usual methods. First, by increasing the girth of the code's Tanner graph (e.g., \cite{LuMoura06, high_girth_10}), more precisely, by avoiding small cycles, which are known to be harmful for the iterative decoding process. Large girth speeds the convergence of iterative decoding and leads to better performance if the number of iterations is limited. Second, by lowering the so-called \emph{error-floors}. This phenomenon is a significant flattening of the \emph{bit-error-rate} (BER) curve beyond a certain \emph{signal-to-noise-ratio} (SNR). Lower error-floors can be achieved by the avoidance of specific substructures that act as internal states in which the decoder can be trapped. Depending on the channel, these substructures have quite different characteristics. Over the \emph{binary erasure channel} (BEC), such states are known as stopping sets. These combinatorial entities completely determine the decoding performance over the BEC \cite{Di02}.

An ongoing line of research for designing LDPC codes focuses on the construction of codes with structured parity-check matrices, allowing the use of low-complexity encoding algorithms as opposed to random-like code constructions. Moreover, these codes can guarantee structural properties like 4-cycle-free Tanner graphs and are more amenable for extensive mathematical analysis. A fertile and sophisticated approach is to apply the well-known concepts of combinatorial design theory to the construction of LDPC codes. Examples are codes based on certain balanced incomplete block designs (BIBDs) called \emph{Steiner 2-designs} \cite{crc_handbook}, an approach that has been intensively studied in the literature \mbox{(e.g., \cite{Vasic04, Vasic01, JohnWell2001, resolvable_designs_for_ldpc_codes, GrunHub12})}. 
A wide range of structured LDPC codes can also be derived from the field of \emph{partial geometries} (e.g., \cite{JohnWell2004}), including the codes from Steiner 2-designs as a subclass. Further subclasses are codes on \emph{generalized quadrangles} as presented in \cite{Vontobel2001}, and codes on \emph{transversal designs} considered in \cite{JohnWell2004,JohnsonDiss} (cf. Subsection~\ref{related} for further discussion). 

This paper is concerned with an exhaustive investigation of stopping sets in LDPC codes from transversal designs. 
Although constructions of such codes were considered for arbitrary column weights, previous examinations primarily focus on column weight 3 codes from transversal designs without \emph{Pasch-configurations}, i.e., avoiding codewords of weight 4.
By contrast, this paper focuses on codes with column weight 4 and higher. 
The transversal designs arise from \emph{mutually orthogonal Latin squares} (MOLS) which provide a simple setting for investigating substructures such as stopping sets and simplify the derivation of codes that avoid some of these substructures.
Here, the MOLS possess a simple structure and are completely determined by some scale factors. Depending on the choice of these scale factors, they result in quite different stopping set distributions. The primary aim of this paper is therefore to identify well-matching choices of scale factors to reveal optimal stopping set distributions, leading to codes with significantly lower error-floors. 

The paper is organized as follows: In Section~\ref{preliminaries}, we give some standard material from combinatorial design theory and coding theory that are important for our purposes. In Section~\ref{link_stopsets_latinSquares}, we transfer the principle of configurations (as a generalization of stopping sets) to the level of Latin squares, leading to the concept of (partial) subrectangles and in the case of stopping sets to full-correlating subrectangles.
In Section~\ref{codes_on_simple_structured_MOLS}, we introduce a class of MOLS with simple structure and investigate the occurrences of full-correlating subrectangles, leading to some bounds for the stopping distance of the arising codes. 
Based on this class of MOLS, we develop a novel code design in Section~\ref{improved_codes}, producing LDPC codes with improved stopping set distributions. Section~\ref{encoding} is concerned with the low-complexity encoding of the arising codes. In Section~\ref{simulations}, we verify the improvement of these codes by extensive simulations. Finally, we discuss the relationship of the presented codes with related code classes in Section~\ref{related}, and Section~\ref{conclusion} concludes the paper.

\section{Preliminaries}\label{preliminaries}

\subsection{Latin Squares}

A \emph{Latin square} $L$ of order $n$ is an array of $n\times n$ cells, where each row and each column contains every symbol of an $n$-set $S$ exactly once \cite{crc_handbook}. Let $L[x,y]$ denote the symbol at row $x\in X$ and column $y\in Y$, where $X$ and $Y$ are $n$-sets indexing the rows and columns of $L$, respectively. Two Latin squares $L_1$ and $L_2$ of order $n$ are \emph{orthogonal}, if they share a common row and column set $X$ and $Y$, respectively, and if the ordered pairs $(L_1[x,y],L_2[x,y])$ are unique for all $(x,y) \in X \times Y$. A set of Latin squares $L_1,...,L_m$ is called \emph{mutually orthogonal}, if for every $1\leq i<j\leq m$, $L_i$ and $L_j$ are orthogonal. These are also referred to as \emph{MOLS}, \emph{mutually orthogonal Latin squares}.

\subsection{Transversal Designs}

A \emph{transversal design} TD$(k,n)$ of order (or group size) $n$ and block size $k$ is a triple $(\mathcal{P},\mathcal{G},\mathcal{B})$, where

\begin{enumerate}[label=(\arabic*),topsep=0pt,parsep=0pt]
	\item $\mathcal{P}$ is a set of $kn$ points.
	\item $\mathcal{G}$ is a partition of $\mathcal{P}$ into $k$ classes of size $n$, called \emph{groups}.
	\item $\mathcal{B}$ is a collection of $k$-subsets of $\mathcal{P}$, called \emph{blocks}.
	\item Every unordered pair of points from $\mathcal{P}$ is contained either in exactly one group or in exactly one block (cf.  \cite{crc_handbook}).
\end{enumerate}

It follows from (1)-(4) that any point of $\mathcal{P}$ occurs in exactly $n$ blocks and that the number of blocks must be $|\mathcal{B}|=n^2$.

\begin{theorem}\label{equiv_MOLS_TDs}
For $k\ge 3$, the existence of a set of $m:=k-2$ mutually orthogonal Latin squares (MOLS) of order $n$ is equivalent to the existence of a TD$(k,n)$ \cite{crc_handbook, BosShri90, Wilson74}.
\end{theorem}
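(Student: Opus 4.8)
The plan is to establish the equivalence by exhibiting two mutually inverse constructions and then verifying, in each direction, that the defining axioms (1)--(4) of a TD$(k,n)$ correspond exactly to the Latin and orthogonality properties of the squares. The whole argument rests on one dictionary: rows and columns become two of the groups, the symbol set of each square becomes one further group, and each cell becomes a block.

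First I would treat the direction from MOLS to designs. Given $m=k-2$ MOLS $L_1,\dots,L_m$ of order $n$ on a common row set $X$, column set $Y$, and symbol sets $S_1,\dots,S_m$, I would take as point set the disjoint union $X\cup Y\cup S_1\cup\cdots\cup S_m$, declare these $k$ sets to be the groups, and for each cell $(x,y)\in X\times Y$ form the block $\{x,y,L_1[x,y],\dots,L_m[x,y]\}$. This gives $n^2$ blocks, each meeting every group exactly once, so axioms (1)--(3) are immediate. The substance is axiom (4), which I would verify by cases on the groups of the two chosen points: a pair inside one group lies in that group and, since each block meets a group once, in no block; a pair $\{x,y\}$ with $x\in X$, $y\in Y$ lies only in the block of cell $(x,y)$; a pair consisting of a row point $x\in X$ and a symbol $s\in S_i$ lies in exactly one block because $s$ occurs exactly once in row $x$ of $L_i$ (the Latin property), and symmetrically for a column point; finally, a pair of symbols $s\in S_i$, $t\in S_j$ with $i\neq j$ lies in exactly one block precisely because $(s,t)$ occurs exactly once among the pairs $(L_i[x,y],L_j[x,y])$, which is exactly the orthogonality of $L_i$ and $L_j$.

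For the converse I would reverse this dictionary. Starting from a TD$(k,n)$ with groups $G_1,\dots,G_k$, I would first observe that the exclusive ``either/or'' in (4) forces every block to meet each group in exactly one point: two points of a common group already share a group and hence lie in no common block, so a $k$-element block spread over $k$ groups must hit each exactly once. Using $G_1$ and $G_2$ to index rows and columns, axiom (4) guarantees that each pair $(x,y)\in G_1\times G_2$ lies in a unique block $B_{x,y}$, which lets me define $L_i[x,y]$ to be the unique point of $B_{x,y}$ in $G_{i+2}$. That each $L_i$ is a Latin square, and that distinct $L_i,L_j$ are orthogonal, then follow by running the previous case analysis backwards: a repeated symbol in some row or column would force a coordinate--symbol pair into two blocks, and a repeated ordered symbol pair would force a symbol--symbol pair into two blocks, each contradicting (4).

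The routine parts are the counting of points and blocks and the bookkeeping of the disjoint unions. The conceptual core --- and the step I would be most careful about --- is the exact matching in the last two cases: that the Latin-square condition is equivalent to unique block-containment of coordinate--symbol pairs, and that orthogonality is equivalent to unique block-containment of symbol--symbol pairs. Keeping the ``exactly one'' accounting straight (distinctness versus covering all pairs, using $|S_i|=n$ against the $n^2$ cells) is precisely what makes the two constructions genuinely inverse rather than merely compatible.
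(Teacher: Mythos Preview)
Your proposal is correct and follows exactly the same dictionary as the paper's proof: rows, columns, and the symbol sets become the groups, and each cell $(x,y)$ yields the block $\{x,y,L_1[x,y],\dots,L_m[x,y]\}$. The paper merely sketches this forward construction and then asserts that ``this process can be reversed,'' whereas you additionally spell out the case analysis verifying axiom~(4) and the inverse map $L_i[x,y]:=B_{x,y}\cap G_{i+2}$; so your argument is a more detailed version of the same proof rather than a different one.
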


\begin{proof}
We will outline the proof of this known result, since it is important for the understanding of our paper. 
Let $L_1,\hdots,L_m$ be $m$ MOLS with symbol sets $S_1,\hdots,S_m$, and with common row and column sets $X$ and $Y$, respectively. We may assume that the sets $X,Y,S_1,\hdots,S_m$ are pairwise disjoint, which can easily be achieved by renaming the elements. Then we obtain a TD$(k,n)$ with points $\mathcal{P}=\{X \cup Y \cup S_1 \cup \hdots \cup S_m\}$, groups $\mathcal{G}=\{X,Y,S_1,\hdots,S_m\}$ and blocks $\mathcal{B}=\{\{x, y, L_1[x,y], \hdots, L_{m}[x,y]\}: (x,y)\in X\times Y\}$. This process can be reversed to recover a set of $m=k-2$ MOLS from a TD$(k,n)$ for $k\geq 3$.
\end{proof}

Every TD$(k,n)$ can be described by a binary $|\mathcal{P}|\times|\mathcal{B}|$ \emph{incidence matrix} $\mathcal{N}$, with rows indexed by the points of $\mathcal{P}$, columns indexed by the blocks of $\mathcal{B}$, and
\[\mathcal{N}_{i,j}=\begin{cases}
  1,  & \text{if the } i^{\text{th}} \text{ point is in the } j^{\text{th}} \text{ block}\\
  0, & \text{otherwise}.
\end{cases}\]

\begin{example}
Fig.~\ref{fig:MOLS} depicts the incidence matrix of the transversal design TD$(4,5)$ which is equivalent to the orthogonal Latin squares given in the same figure by using the correspondence detailed in the proof of Theorem~\ref{equiv_MOLS_TDs}.
\end{example}

\begin{figure}[t!]
\renewcommand{\arraystretch}{1.2}
\subfloat{
		\scalebox{0.97}{$
		\begin{array}[b]{|ccccc|}
		\hline
		0&1&2&3&4\\
		1&2&3&4&0\\
		2&3&4&0&1\\
		3&4&0&1&2\\
		4&0&1&2&3\\
		\hline
		\end{array}
		$}
}
\subfloat{
		\scalebox{0.97}{$
		\begin{array}[b]{|ccccc|}
		\hline
		0&1&2&3&4\\
		2&3&4&0&1\\
		4&0&1&2&3\\
		1&2&3&4&0\\
		3&4&0&1&2\\
		\hline
		\end{array}
		$}
}
\subfloat{
	\includegraphics[scale = 0.54]{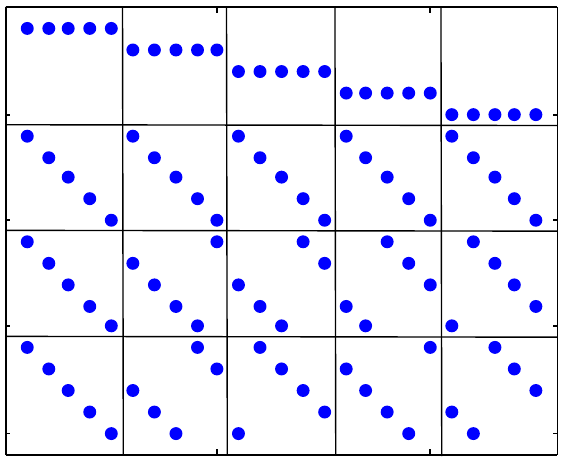}
}
\caption{Orthogonal Latin squares of order 5 and the resulting TD$(4,5)$}
\label{fig:MOLS}
\end{figure}

\subsection{Configurations in Transversal Designs}\label{configuration_in_designs}

Let TD$(k,n)$ be a transversal design with point set $V$ and block set $\mathcal{B}$. By a $(q,l)$-\emph{configuration}, we mean a subset of $l$ blocks of $\mathcal{B}$ whose union contains precisely $q$ points of $V$. With respect to the geometric approach, the blocks are also called \emph{lines}. The \emph{degree} of a point is the number of lines containing the point. If every point has degree at least $2$, the configuration is \emph{full}. 

Note that the concept of configurations can be equally transferred to any combinatorial design, for example in \cite{Col09}, where configurations are introduced for Steiner triple systems.

\subsection{LDPC Codes from Transversal Designs}\label{codes_from_TDs}

The incidence matrix of a TD$(k,n)$ with points $\mathcal{P}$ and blocks $\mathcal{B}$ can directly be used as the parity-check matrix $H$ of a \textit{TD LDPC code}, such that the $|\mathcal{P}|=kn$ points correspond to the parity-check equations of $H$ and the $|\mathcal{B}|=n^2$ blocks correspond to the code bits. The resulting TD LDPC code has block length $N=n^2$, rate $R\geq(n-k)/n$, and a parity-check matrix $H$ with column weight $k$ and row weight $n$. The column weight $k$ corresponds to the block size of TD$(k,n)$ and the row weight $n$ arises from the fact that every point is incident to exactly $n$ blocks. The associated Tanner graph of $H$ is free of 4-cycles and has girth $g=6$ (e.g.~\cite{JohnWell2004}).
Note that these codes were first considered in \cite{JohnWell2004} as a subclass of codes from partial geometries.

\begin{figure*}[t!]
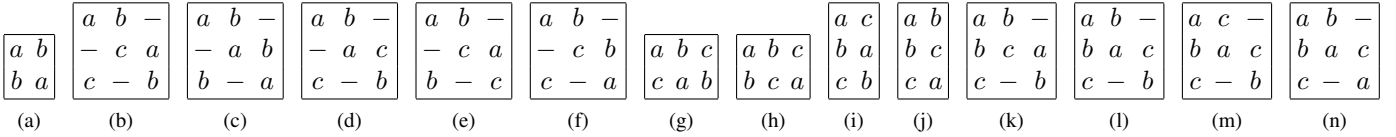

\setlength{\arraycolsep}{2.15pt}
\centerline{
\subfloat[]{
	\begin{array}[b]{|cc|}
	\hline
	a & b \\
	b & a \\
	\hline
	\end{array}
}
\subfloat[]{
	\begin{array}[b]{|ccc|}
	\hline
	a & b & -\\
	- & c & a\\
	c & - & b\\
	\hline
	\end{array}
}
\subfloat[]{
	\begin{array}[b]{|ccc|}
	\hline
	a & b & -\\
	- & a & b\\
	b & - & a\\
	\hline
	\end{array}
}
\subfloat[]{
	\begin{array}[b]{|ccc|}
	\hline
	a & b & -\\
	- & a & c\\
	c & - & b\\
	\hline
	\end{array}
}
\subfloat[]{
	\begin{array}[b]{|ccc|}
	\hline
	a & b & -\\
	- & c & a\\
	b & - & c\\
	\hline
	\end{array}
}
\subfloat[]{
	\begin{array}[b]{|ccc|}
	\hline
	a & b & -\\
	- & c & b\\
	c & - & a\\
	\hline
	\end{array}
}
\subfloat[]{
	\begin{array}[b]{|ccc|}
	\hline
	a & b & c\\
	c & a & b\\
	\hline
	\end{array}
}
\subfloat[]{
	\begin{array}[b]{|ccc|}
	\hline
	a & b & c\\
	b & c & a\\
	\hline
	\end{array}
}
\subfloat[]{
	\begin{array}[b]{|cc|}
	\hline
	a & c\\
	b & a\\
	c & b\\
	\hline
	\end{array}
}
\subfloat[]{
	\begin{array}[b]{|cc|}
	\hline
	a & b\\
	b & c\\
	c & a\\
	\hline
	\end{array}
}
\subfloat[]{
	\begin{array}[b]{|ccc|}
	\hline
	a & b & -\\
	b & c & a\\
	c & - & b\\
	\hline
	\end{array}
}
\subfloat[]{
	\begin{array}[b]{|ccc|}
	\hline
	a & b & -\\
	b & a & c\\
	c & - & b\\
	\hline
	\end{array}
}
\subfloat[]{
	\begin{array}[b]{|ccc|}
	\hline
	a & c & -\\
	b & a & c\\
	c & - & b\\
	\hline
	\end{array}
}
\subfloat[]{
	\begin{array}[b]{|ccc|}
	\hline
	a & b & -\\
	b & a & c\\
	c & - & a\\
	\hline
	\end{array}
}
}
\caption{Full subrectangles up to $7$ cells with symbols $a,b,c$. The subrectangles (a)-(f) are polygons (Theorem~\ref{polygon_condition}).}
\label{upto7cellconfigs}
\end{figure*}

\subsection{Stopping Sets in TD LDPC Codes}

A transversal design can be described by an incidence matrix, which is, in our case, also the parity-check matrix of a TD LDPC code. By intersecting the rows and columns associated with the points and blocks of a full $(q,l)$-configuration (Subsection~\ref{configuration_in_designs}), we obtain a submatrix with $q$ rows, $l$ columns and row weights of at least 2. 
Furthermore, this $(q\times l)$-submatrix can be unambiguously associated with a subgraph of the code's Tanner graph, which is an equivalent representation of the code's parity-check matrix. In this context, the resulting subgraph with $l$ variable nodes and $q$ check nodes is referred to as a \emph{stopping set} of size $l$.

The well-known concept of stopping sets was first introduced in \cite{Di02} and was studied in a various number of subsequent papers (e.g., \cite{Ka03}, \cite{SchwVar06}). 
From a coding theoretic perspective, stopping sets are mostly defined as special subgraphs of the code's Tanner graph; precisely, a stopping set is a subset $\Sigma$ of the variable nodes such that every neighboring check node of $\Sigma$ is connected to $\Sigma$ at least twice (e.g. \cite{SchwVar06}). This definition of stopping sets clearly coincides with the design theoretic approach of full configurations in transversal designs.
The \emph{stopping distance} $s_{min}$ is the size of the smallest stopping set occurring in the code's parity-check matrix $H$. Note that the stopping distance depends on the specific choice of the parity-check matrix for a given code and not on the code itself \cite{SchwVar06}. By the \textit{stopping set distribution} we mean the multiplicities of the occurring stopping sets grouped by their sizes.

\begin{figure}[b!]
\renewcommand{\arraystretch}{1.2}
	\centerline{
	\begin{array}[t]{|c|ccc|}
	\hline
	      &y_1&y_2&y_3\\
	\hline 
	x_1   & (a,\gamma) & (b,\delta) & - \\
	x_2   & - & (a,\zeta) & (b,\gamma) \\
	x_3   & (b,\zeta) & - & (a,\delta) \\
	\hline
	\end{array}
	}
\caption{A pair of full-correlating subrectangles $\mathscr{C}_1$ and $\mathscr{C}_2$ of size 6. The first and second elements of the symbol pairs belong to $\mathscr{C}_1$ and $\mathscr{C}_2$, respectively.}
\label{fig:orthogonal_configs}
\end{figure}

\section{Link of Configurations and Latin Squares}\label{link_stopsets_latinSquares}

For investigating the occurrence or avoidance of stopping sets in a TD LDPC code (or, equivalently, of full configurations in a transversal design), we may consider the representation of these substructures in the underlying Latin squares.

\subsection{(Partial) Subrectangles}

\begin{definition}
Let $L$ be a Latin square with $n\times n$ cells. A (\emph{partial}) \emph{subrectangle} $\mathscr{C}$ of size $\ell$ is a subset of $\ell \leq n^2$ cells, explicitly given by the triples
\begin{multline*}
\mathscr{C}=\{(i_1,j_1,s_1),(i_2,j_2,s_2),\hdots,(i_\ell,j_\ell,s_\ell) :\\ L[i_k,j_k]=s_k,\ 1\leq k \leq \ell\}.
\end{multline*}
We say that $\mathscr{C}$ \emph{occurs in} $L$. Define $\mathcal{I}_\mathscr{C}=\bigcup i_k$ as the \emph{row set}, $\mathcal{J}_\mathscr{C}=\bigcup j_k$ as the \emph{column set} and $\mathcal{S}_\mathscr{C}=\bigcup s_k$ as the \emph{symbol set} of $\mathscr{C}$. A subrectangle is \emph{full}, if each element of $\mathcal{I}_\mathscr{C}$, $\mathcal{J}_\mathscr{C}$ and $\mathcal{S}_\mathscr{C}$ occurs in at least two triples of $\mathscr{C}$.
\end{definition}

Note that this concept has been similarly introduced in \cite{LaenMil07} for investigating stopping sets in codes from certain Steiner triple systems, which are also based on Latin squares. 
The term `partial' means that not every cell of the spanned subrectangle is filled. However, since a differentiation is not relevant for our purposes, we omit this expression for the rest of the paper.

\subsection{New Concept: Correlating Subrectangles in MOLS}

\begin{definition} Let $L_1,\hdots,L_m$ be MOLS. The subrectangles $\mathscr{C}_1,\hdots,\mathscr{C}_m$ are \emph{correlating}, if $\mathscr{C}_i$ occurs in $L_i$ and if the subrectangles cover the same cell positions within the Latin squares.
Additionally, if all subrectangles are full, then they are referred to as \mbox{\emph{full-correlating subrectangles}}. For any two correlating subrectangles $\mathscr{C}_i$ and $\mathscr{C}_j$, we say that they are
\begin{enumerate}
\item \emph{coincident}, since they cover the same cell positions, and
\item \emph{orthogonal}, i.e., for all $(s, s')\in \mathcal{S}_{\mathscr{C}_i} \times \mathcal{S}_{\mathscr{C}_j}$ it follows that
\[
|\{(x,y): (x,y,s) \in \mathscr{C}_i, (x,y,s')\in \mathscr{C}_j\}|\leq 1.
\]
\end{enumerate}
\end{definition}

\begin{example}
Fig. \ref{fig:orthogonal_configs} shows a pair of full-correlating subrectangles $\mathscr{C}_1$ and $\mathscr{C}_2$ of size 6, having a common row and column set $\mathcal{I}=\{x_1,x_2,x_3\}$ and $\mathcal{J}=\{y_1,y_2,y_3\}$, respectively, and symbol sets $\mathcal{S}_{\mathscr{C}_1}=\{a,b\}$ and $\mathcal{S}_{\mathscr{C}_2}=\{\gamma,\delta,\zeta\}$.
\end{example}

\begin{theorem}\label{link_configurations_to_latin_squares}

Let $L_1,...,L_m$ be MOLS of order $n$ and TD$(m+2,n)$ the associated transversal design. A collection of $m$ correlating subrectangles $\mathscr{C}_1,...,\mathscr{C}_m$ of size $\ell$ with common row and column sets $\mathcal{I}$ and $\mathcal{J}$, respectively, correspond to a $(q,l)$-configuration with $q = |\mathcal{I}|+|\mathcal{J}|+\sum_{1\leq i\leq \ell} |\mathcal{S}_{\mathscr{C}_i}|$ points and $l=\ell$ lines in the  TD$(m+2,n)$. If all subrectangles are full, they correspond to a full configuration in the transversal design and thus to a stopping set of size $\ell$ in the arising code.
\end{theorem}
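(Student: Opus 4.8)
The plan is to unwind the definitions through the MOLS--transversal-design correspondence established in Theorem~\ref{equiv_MOLS_TDs}. Recall that under that correspondence each cell position $(x,y)\in X\times Y$ of the Latin squares gives rise to exactly one block $\{x,\,y,\,L_1[x,y],\hdots,L_m[x,y]\}$ of the TD$(m+2,n)$, and that this assignment is injective, since distinct cells differ in their row- or column-point. Because the subrectangles $\mathscr{C}_1,\hdots,\mathscr{C}_m$ are correlating, they occupy one common set of $\ell$ distinct cell positions $(i_k,j_k)$, $1\le k\le\ell$; mapping these through the correspondence yields $\ell$ distinct blocks, hence $l=\ell$ lines. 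This is the easy half of the argument.

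For the point count, I would examine the union of these $\ell$ blocks. Each block contributes its row index, its column index, and its $m$ symbols. Because the groups $X,Y,S_1,\hdots,S_m$ are pairwise disjoint by construction, the union splits cleanly into contributions from each group with no collision. The distinct row indices appearing are exactly $\mathcal{I}$, the distinct column indices are exactly $\mathcal{J}$, and---reading off the definition of a subrectangle together with the coincidence of cell positions---the distinct symbols drawn from $L_i$ are exactly $\mathcal{S}_{\mathscr{C}_i}$. Summing over the disjoint groups then gives $q=|\mathcal{I}|+|\mathcal{J}|+\sum_{i=1}^{m}|\mathcal{S}_{\mathscr{C}_i}|$, establishing the first claim.

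For the fullness statement, I would translate the combinatorial condition point by point. If every $\mathscr{C}_i$ is full, then each element of $\mathcal{I}$ and each element of $\mathcal{J}$ occurs in at least two of the $\ell$ triples, and each symbol of $\mathcal{S}_{\mathscr{C}_i}$ occurs in at least two triples of $\mathscr{C}_i$. In terms of the design this says precisely that the corresponding row-, column-, and symbol-points each lie in at least two of the $\ell$ blocks, i.e., every point of the configuration has degree at least $2$. By the definition in Subsection~\ref{configuration_in_designs} this is a full configuration, and by the identification of full configurations with stopping sets it corresponds to a stopping set of size $l=\ell$ in the TD LDPC code.

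The argument is largely bookkeeping; the only points demanding care are the injectivity of the cell-to-block map (so that $l=\ell$ and not fewer) and the disjointness of the groups (so that the three contributions to $q$ are counted without overlap). I expect the main---though still modest---obstacle to be making fully explicit that the symbol-points taken from square $L_i$ are neither more nor fewer than $|\mathcal{S}_{\mathscr{C}_i}|$, which rests entirely on the coincidence of cell positions guaranteed by the correlating property.
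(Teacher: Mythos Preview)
Your proposal is correct and follows exactly the approach the paper itself takes: the paper's proof consists of a single sentence referring the reader to the MOLS--transversal-design equivalence of Theorem~\ref{equiv_MOLS_TDs}, and you have simply written out that verification in full. Your version is in fact more careful than the paper, since you make explicit the injectivity of the cell-to-block map and the disjointness of the groups, and you silently correct the upper summation index (which should be $m$, not $\ell$).
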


\begin{proof}
This relationship can be verified by considering the equivalence between MOLS and transversal designs. 
\end{proof}

\subsection{Investigation of Full-Correlating Subrectangles in MOLS}

We are interested in determining full-correlating subrectangles in MOLS, since they correspond to full configurations in transversal designs and thus to stopping sets in the arising codes. Fig.~\ref{upto7cellconfigs} displays all possible full subrectangles of size $\ell\leq 7$ which might occur in a Latin square. Note that the order of the rows and columns are not relevant, since, despite any reordering, the structural dependencies of a subrectangle are retained. Now, we have a closer look at the following cases:

\subsubsection{Single Latin square ($m=1$)}

If a single Latin square is used for a TD$(3,n)$, every subrectangle depicted by figure \mbox{(a)-(n)} results directly in a stopping set, as long as the subrectangle occurs in the Latin square. Since the smallest subrectangle has size 4, we can bound the stopping distance of a TD LDPC code with column weight 3 by $s_{min}\geq 4$.

\subsubsection{Two orthogonal Latin squares ($m=2$)}

In this case, only a pair of full-correlating subrectangles lead to a stopping set in the arising code. Thus, we search for pairs of coincident and orthogonal subrectangles among the figures (a)-(n).  
The 4-square depicted by (a) has no coincident pair and thus can not induce a stopping set. The orthogonal pairs among the coincident subrectangles \mbox{(b)-(f)} of size 6 are \mbox{(b)+(c)}, \mbox{(d)+(e)}, \mbox{(d)+(f)} and \mbox{(e)+(f)}. The pair \mbox{(b)+(c)} results in the full \mbox{$(11,6)$-configuration} depicted by Fig. \ref{fig:configsBlockSize4}~(B) by using Theorem~\ref{link_configurations_to_latin_squares} and the remaining pairs in the star visualized by (C). Further coincident and orthogonal pairs of subrectangles are \mbox{(g)+(h}) and \mbox{(i)+(j)}, both resulting in the full $(11,6)$-configuration (B). Among the full subrectangles (k)-(n) of size~7, there are no orthogonal pairs. Note that the generalized Pasch-configuration depicted by (A) can not occur in the TD$(4,n)$, since there are no corresponding subrectangles.
The smallest full-correlating subrectangles, that might occur, have size 6 and thus, the stopping distance of a TD LDPC code with column weight 4 can be bounded by $s_{min}\geq 6$.

By designing the MOLS in such a way that some of the full-correlating subrectangles (in particular small ones) are avoided, we can prevent certain stopping sets from occurring in the arising codes. In the following section, we introduce a class of simple-structured MOLS which are ideally suited for this purpose.

\begin{figure}[!t]
\renewcommand{\thesubfigure}{\Alph{subfigure}}
\centering
	\subfloat[(10,5)]{
		\includegraphics[scale=0.42, trim=0 0 0 2cm]{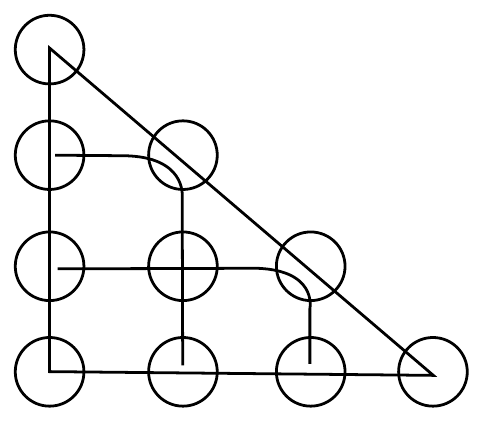}
	}
	\subfloat[(11,6)]{
		\includegraphics[scale=0.42, trim=0 0 0 2cm]{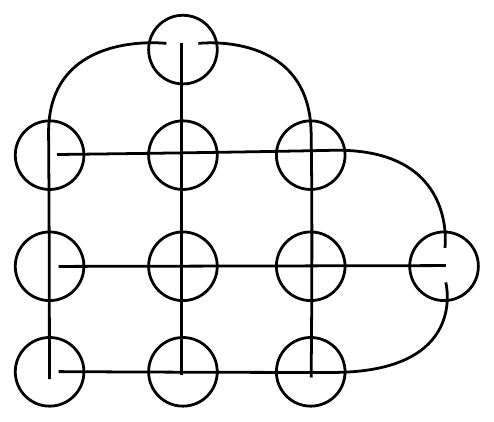}
	}
	\subfloat[(12,6)]{
		\includegraphics[scale=0.39, trim=0 0 0 2cm]{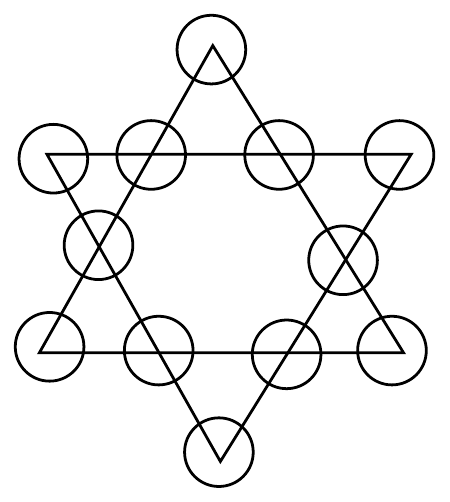}
	}
	\caption{Geometric representation of possible full $(q,l)$-configurations for $l\leq 6$ lines, 4 points per line and with the constraint that any two lines intersect in at most one point. The points and lines can be considered as the points and blocks of a combinatorial design with block size 4, where a block contains a point if the corresponding line goes through the point.}
\label{fig:configsBlockSize4}
\end{figure}

\section{TD LDPC Codes on Simple-Structured MOLS}\label{codes_on_simple_structured_MOLS}

We first need the following straightforward lemma, giving mutually orthogonal Latin squares isomorphic to Cayley addition tables (cf.~\cite{crc_handbook}):

\begin{lemma}\label{simple_structured_MOLS}
Let $\mathbb{F}_q$ be the Galois field of any prime power order $q$. We obtain a Latin square $\mathcal{L}^{(\alpha,\beta)}_q$ of order $q$ and scale factors $\alpha,\beta \in \mathbb{F}_{q}^{*}= \mathbb{F}_q\setminus\{0\}$ by
\[
\mathcal{L}^{(\alpha,\beta)}_q [x,y] = \alpha x + \beta y,\ x,y \in\mathbb{F}_q.
\]

If $\beta=1$, we simply write $\mathcal{L}^{(\alpha)}_q$ instead of $\mathcal{L}^{(\alpha,1)}_q$.  
Let $\mathbb{F}_q^*=\{\phi_1,\phi_2,\hdots,\phi_{q-1}\}$. 
Now, we associate every pair $(\alpha,\beta)\in (\mathbb{F}_q^*)^2$ with a Latin square $\mathcal{L}^{(\alpha,\beta)}_q$ and define equivalence classes by $U_i = \{x\odot(\phi_i,1) : x \in \mathbb{F}_q^* \}$ for $i=1,\hdots,q-1$, where $\odot$ is the elementwise multiplication over $\mathbb{F}_q$. These classes partition the set $(\mathbb{F}_q^*)^2$. Let $(\alpha_1,\beta_1),\hdots,(\alpha_{q-1},\beta_{q-1})$ be any representative system, i.e., $(\alpha_i,\beta_i)\in U_i$. Then, the Latin squares of any $m$-subset of $\{\mathcal{L}_q^{(\alpha_i,\beta_i)} : i=1,\hdots,q-1\}$, $1\leq m \leq q-1$,  are $m$ MOLS that can be used to build up a TD$(m+2,q)$ and thus to construct a TD LDPC code.
\end{lemma}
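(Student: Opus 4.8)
The plan is to verify the three ingredients of the statement in turn: that each $\mathcal{L}^{(\alpha,\beta)}_q$ is genuinely a Latin square, that the classes $U_i$ partition $(\mathbb{F}_q^*)^2$, and that representatives drawn from distinct classes yield pairwise orthogonal squares. First I would establish the Latin square property directly from the definition. Fixing a row $x$, the map $y \mapsto \alpha x + \beta y$ is a bijection of $\mathbb{F}_q$ because $\beta \neq 0$, so every symbol occurs exactly once in row $x$; symmetrically, fixing a column $y$, the map $x \mapsto \alpha x + \beta y$ is a bijection because $\alpha \neq 0$, so every symbol occurs exactly once per column. This is precisely the point where the restriction $\alpha,\beta \in \mathbb{F}_q^*$ enters.

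The crux of the argument is to translate orthogonality into a linear-algebraic condition. By definition, $\mathcal{L}^{(\alpha_1,\beta_1)}_q$ and $\mathcal{L}^{(\alpha_2,\beta_2)}_q$ are orthogonal exactly when the map
\[
\Phi : (x,y) \longmapsto (\alpha_1 x + \beta_1 y,\ \alpha_2 x + \beta_2 y)
\]
is injective on $\mathbb{F}_q \times \mathbb{F}_q$, hence bijective by finiteness. Since $\Phi$ is the linear map with matrix $\left(\begin{smallmatrix} \alpha_1 & \beta_1 \\ \alpha_2 & \beta_2 \end{smallmatrix}\right)$, this holds if and only if its determinant $\alpha_1\beta_2 - \alpha_2\beta_1$ is nonzero. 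Dividing by $\beta_1\beta_2 \in \mathbb{F}_q^*$, the criterion becomes $\alpha_1/\beta_1 \neq \alpha_2/\beta_2$, so two such squares are orthogonal precisely when the ratios $\alpha/\beta$ differ.

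With this in hand the equivalence classes reveal their meaning. Writing $U_i = \{(x\phi_i, x) : x \in \mathbb{F}_q^*\}$, every pair in $U_i$ has ratio $(x\phi_i)/x = \phi_i$; conversely any $(\alpha,\beta) \in (\mathbb{F}_q^*)^2$ satisfies $\alpha/\beta = \phi_i$ for a unique $i$ and equals $(\beta\phi_i,\beta) \in U_i$. Thus $U_i$ is exactly the set of pairs of ratio $\phi_i$, and since $\phi_1,\hdots,\phi_{q-1}$ exhaust $\mathbb{F}_q^*$, the classes partition $(\mathbb{F}_q^*)^2$ into $q-1$ blocks of size $q-1$. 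Choosing representatives $(\alpha_i,\beta_i)\in U_i$ then forces the ratios $\alpha_i/\beta_i = \phi_i$ to be pairwise distinct, so by the orthogonality criterion any two of the squares $\mathcal{L}_q^{(\alpha_i,\beta_i)}$ are orthogonal; the same holds \emph{a fortiori} for any $m$-subset, which therefore constitutes $m$ MOLS and, via Theorem~\ref{equiv_MOLS_TDs}, a TD$(m+2,q)$. I do not anticipate a genuine obstacle here: everything reduces to the nonvanishing of a $2\times 2$ determinant, and the only subtlety is to confirm the division step is legitimate, which it is since all scale factors are nonzero.
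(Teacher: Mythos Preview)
Your proof is correct and complete. The paper itself does not supply a proof of this lemma: it is introduced as a ``straightforward lemma, giving mutually orthogonal Latin squares isomorphic to Cayley addition tables'' with a reference to \cite{crc_handbook}, and the argument is left to the reader. Your determinant criterion $\alpha_1\beta_2-\alpha_2\beta_1\neq 0$ and its reformulation as $\alpha_1/\beta_1\neq\alpha_2/\beta_2$ is exactly the standard verification, and your identification of $U_i$ with the fibre of the ratio map $(\alpha,\beta)\mapsto\alpha\beta^{-1}$ over $\phi_i$ is the clean way to see both the partition claim and the mutual orthogonality of representatives at once.
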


\begin{figure}[!b]
\renewcommand{\arraystretch}{1}
\setlength{\arraycolsep}{3.5pt}
\centering
\begin{tikzpicture}[xscale=2.2,yscale=2.3]
\path [draw] (0.55,0.45) rectangle (1.45,-3.55);
\path [draw] (1.55,0.45) rectangle (2.45,-3.55);
\path [draw] (2.55,0.45) rectangle (3.45,-3.55);
\path [draw] (3.55,0.45) rectangle (4.45,-3.55);

\draw[line width=0.5pt, dashed] (0.52,0.48) -- ++(3.96,0) -- ++(0,-1.05) -- ++(-3.96,0) -- ++(0,1.05);

\node[scale=0.8] at (1,0) {
		$\begin{array}[b]{|ccccc|}
		\hline
		0&1&2&3&4\\
		1&2&3&4&0\\
		2&3&4&0&1\\
		3&4&0&1&2\\
		4&0&1&2&3\\
		\hline
		\end{array}$
};
\node[scale=0.8] at (1,-0.46) {$\mathcal{L}^{(1,1)}_5$};

\node[scale=0.8] at (2,0) {
		$\begin{array}[b]{|ccccc|}
		\hline
		0&1&2&3&4\\
		2&3&4&0&1\\
		4&0&1&2&3\\
		1&2&3&4&0\\
		3&4&0&1&2\\
		\hline
		\end{array}$
};
\node[scale=0.8] at (2,-0.46) {$\mathcal{L}^{(2,1)}_5$};

\node[scale=0.8] at (3,0) {
		$\begin{array}[b]{|ccccc|}
		\hline
		0&1&2&3&4\\
		3&4&0&1&2\\
		1&2&3&4&0\\
		4&0&1&2&3\\
		2&3&4&0&1\\
		\hline
		\end{array}$
};
\node[scale=0.8] at (3,-0.46) {$\mathcal{L}^{(3,1)}_5$};

\node[scale=0.8] at (4,0) {
		$\begin{array}[b]{|ccccc|}
		\hline
		0&1&2&3&4\\
		4&0&1&2&3\\
		3&4&0&1&2\\
		2&3&4&0&1\\
		1&2&3&4&0\\
		\hline
		\end{array}$
};
\node[scale=0.8] at (4,-0.46) {$\mathcal{L}^{(4,1)}_5$};

\node[scale=0.8] at (1,-1) {
		$\begin{array}[b]{|ccccc|}
		\hline
		0&2&4&1&3\\
		2&4&1&3&0\\
		4&1&3&0&2\\
		1&3&0&2&4\\
		3&0&2&4&1\\
		\hline
		\end{array}$
};
\node[scale=0.8] at (1,-1.46) {$\mathcal{L}^{(2,2)}_5$};

\node[scale=0.8] at (2,-1) {
		$\begin{array}[b]{|ccccc|}
		\hline
		0&2&4&1&3\\
		4&1&3&0&2\\
		3&0&2&4&1\\
		2&4&1&3&0\\
		1&3&0&2&4\\
		\hline
		\end{array}$
};
\node[scale=0.8] at (2,-1.46) {$\mathcal{L}^{(4,2)}_5$};

\node[scale=0.8] at (3,-1) {
		$\begin{array}[b]{|ccccc|}
		\hline
		0&2&4&1&3\\
		1&3&0&2&4\\
		2&4&1&3&0\\
		3&0&2&4&1\\
		4&1&3&0&2\\
		\hline
		\end{array}$
};
\node[scale=0.8] at (3,-1.46) {$\mathcal{L}^{(1,2)}_5$};

\node[scale=0.8] at (4,-1) {
		$\begin{array}[b]{|ccccc|}
		\hline
		0&2&4&1&3\\
		3&0&2&4&1\\
		1&3&0&2&4\\
		4&1&3&0&2\\
		2&4&1&3&0\\
		\hline
		\end{array}$
};
\node[scale=0.8] at (4,-1.46) {$\mathcal{L}^{(3,2)}_5$};

\node[scale=0.8] at (1,-2) {
		$\begin{array}[b]{|ccccc|}
		\hline
		0&3&1&4&2\\
		3&1&4&2&0\\
		1&4&2&0&3\\
		4&2&0&3&1\\
		2&0&3&1&4\\
		\hline
		\end{array}$
};
\node[scale=0.8] at (1,-2.46) {$\mathcal{L}^{(3,3)}_5$};

\node[scale=0.8] at (2,-2) {
		$\begin{array}[b]{|ccccc|}
		\hline
		0&3&1&4&2\\
		1&4&2&0&3\\
		2&0&3&1&4\\
		3&1&4&2&0\\
		4&2&0&3&1\\
		\hline
		\end{array}$
};
\node[scale=0.8] at (2,-2.46) {$\mathcal{L}^{(1,3)}_5$};

\node[scale=0.8] at (3,-2) {
		$\begin{array}[b]{|ccccc|}
		\hline
		0&3&1&4&2\\
		4&2&0&3&1\\
		3&1&4&2&0\\
		2&0&3&1&4\\
		1&4&2&0&3\\
		\hline
		\end{array}$
};
\node[scale=0.8] at (3,-2.46) {$\mathcal{L}^{(4,3)}_5$};

\node[scale=0.8] at (4,-2) {
		$\begin{array}[b]{|ccccc|}
		\hline
		0&3&1&4&2\\
		2&0&3&1&4\\
		4&2&0&3&1\\
		1&4&2&0&3\\
		3&1&4&2&0\\
		\hline
		\end{array}$
};
\node[scale=0.8] at (4,-2.46) {$\mathcal{L}^{(2,3)}_5$};

\node[scale=0.8] at (1,-3) {
		$\begin{array}[b]{|ccccc|}
		\hline
		0&4&3&2&1\\
		4&3&2&1&0\\
		3&2&1&0&4\\
		2&1&0&4&3\\
		1&0&4&3&2\\
		\hline
		\end{array}$
};
\node[scale=0.8] at (1,-3.46) {$\mathcal{L}^{(4,4)}_5$};

\node[scale=0.8] at (2,-3) {
		$\begin{array}[b]{|ccccc|}
		\hline
		0&4&3&2&1\\
		3&2&1&0&4\\
		1&0&4&3&2\\
		4&3&2&1&0\\
		2&1&0&4&3\\
		\hline
		\end{array}$
};
\node[scale=0.8] at (2,-3.46) {$\mathcal{L}^{(3,4)}_5$};

\node[scale=0.8] at (3,-3) {
		$\begin{array}[b]{|ccccc|}
		\hline
		0&4&3&2&1\\
		2&1&0&4&3\\
		4&3&2&1&0\\
		1&0&4&3&2\\
		3&2&1&0&4\\
		\hline
		\end{array}$
};
\node[scale=0.8] at (3,-3.46) {$\mathcal{L}^{(2,4)}_5$};

\node[scale=0.8] at (4,-3) {
		$\begin{array}[b]{|ccccc|}
		\hline
		0&4&3&2&1\\
		1&0&4&3&2\\
		2&1&0&4&3\\
		3&2&1&0&4\\
		4&3&2&1&0\\
		\hline
		\end{array}$
};
\node[scale=0.8] at (4,-3.46) {$\mathcal{L}^{(1,4)}_5$};
\end{tikzpicture}
\caption{The solid-line boxes represent the Latin squares associated with the equivalence classes $U_1,\hdots,U_4$ over $(\mathbb{F}_5^*)^2$. The Latin squares of the first row, bordered by a dashed line, constitute a possible representative system of these classes and thus are MOLS.}
\label{equivclasses}
\end{figure}

\begin{example}
Fig. \ref{equivclasses} depicts the four equivalence classes over $(\mathbb{F}_5^*)^2$, where the members of the equivalence classes are represented by the associated Latin squares. The Latin squares of the first row, bordered by a dashed line, constitute a possible representative system of these classes and thus are MOLS.
\end{example}

\begin{lemma}\label{replacing_latin_squares}
Let $\mathcal{L}^{(\alpha_1,\beta_1)}_q,\hdots,\mathcal{L}^{(\alpha_m,\beta_m)}_q$ be MOLS, i.e., $(\alpha_i,\beta_i)$ are from different equivalence classes,  $1\leq i \leq m$. We may replace $\mathcal{L}^{(\alpha_i,\beta_i)}_q$ by any $\mathcal{L}^{(\alpha'_i,\beta'_i)}_q$ without changing the stopping set distribution of the resulting TD LDPC code if $(\alpha'_i,\beta'_i)$ is in the same equivalence class as $(\alpha_i,\beta_i)$.
\end{lemma}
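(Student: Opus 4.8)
The plan is to exploit the explicit algebraic form of the simple-structured squares from Lemma~\ref{simple_structured_MOLS} and to show that replacing one square by another from the same equivalence class amounts to nothing more than a relabeling of that square's symbols --- an operation which, at the level of the parity-check matrix, is a mere permutation of rows and therefore leaves the stopping set distribution untouched.

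First I would record the key identity. Since $(\alpha_i',\beta_i')$ lies in the same equivalence class as $(\alpha_i,\beta_i)$, the definition of the classes $U_i$ yields a scalar $c\in\mathbb{F}_q^*$ with $(\alpha_i',\beta_i')=(c\alpha_i,c\beta_i)$. Then for all $x,y\in\mathbb{F}_q$,
\[
\mathcal{L}^{(\alpha_i',\beta_i')}_q[x,y]=c\alpha_i x + c\beta_i y = c\cdot\mathcal{L}^{(\alpha_i,\beta_i)}_q[x,y],
\]
so the new square is obtained from the old one by applying the symbol bijection $\pi\colon s\mapsto cs$ on $\mathbb{F}_q$. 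Because $\pi$ is a bijection, the family stays mutually orthogonal, so the replacement does yield a valid TD LDPC code.

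Next I would translate this relabeling into the transversal design. By the correspondence in the proof of Theorem~\ref{equiv_MOLS_TDs}, the symbols of the $i$-th square index exactly the points of the group $S_i$, whereas the block coordinates $x,y$ and the symbols of all remaining squares are unaffected. Hence passing from $\mathcal{L}^{(\alpha_i,\beta_i)}_q$ to $\mathcal{L}^{(\alpha_i',\beta_i')}_q$ relabels the points of the single group $S_i$ via $\pi$ and fixes every other point and every block. On the incidence matrix $H=\mathcal{N}$ this is precisely a permutation of the rows indexed by $S_i$, with the columns (blocks) left in place; that is, $H'=PH$ for a permutation matrix $P$.

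Finally I would invoke the invariance of stopping sets under row permutations. A set $\Sigma$ of variable nodes is a stopping set exactly when the column-submatrix $H[\cdot,\Sigma]$ has no row of weight one, a condition depending only on the multiset of row weights, which $P$ preserves. Thus $\Sigma$ is a stopping set for $H'$ iff it is one for $H$, with the same size, and the distribution is unchanged. Equivalently, and more in the spirit of Section~\ref{link_stopsets_latinSquares}, one may argue directly that $\pi$ induces a size-preserving bijection on full-correlating subrectangles, sending $\mathscr{C}_i=\{(x,y,s)\}$ to $\{(x,y,\pi(s))\}$, which is still full (as $\pi$ is a bijection) and still orthogonal to the others (as the squares remain MOLS), and then apply Theorem~\ref{link_configurations_to_latin_squares}. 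I do not expect a genuine obstacle; the only point needing care is to confirm that $\pi$ acts on the single symbol set of the $i$-th square and therefore disturbs neither the coincidence nor the orthogonality nor the fullness of the remaining subrectangles.
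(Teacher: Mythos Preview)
Your proposal is correct and follows essentially the same idea as the paper: the key observation in both is that two squares from the same equivalence class differ only by a bijection of symbols, which therefore preserves the stopping set distribution. The paper's proof is a two-sentence sketch of this (symbol renaming induces a bijection on subrectangles), whereas you make the bijection explicit as $s\mapsto cs$ and additionally recast it as a row permutation $H'=PH$ of the parity-check matrix; your alternative argument via full-correlating subrectangles at the end is precisely the paper's route.
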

\begin{proof}
There is a bijection between the symbol sets of $\mathcal{L}^{(\alpha_i,\beta_i)}_q$ and $\mathcal{L}^{(\alpha'_i,\beta'_i)}_q$ that preserves the structure of the Latin squares (renaming of the symbols). By applying this symbol-renaming to any subrectangle, we obtain a bijective mapping between the subrectangles of both Latin squares, leading to the same stopping set distributions.
\end{proof}

\begin{theorem}\label{reduced_form}
The MOLS $\mathcal{L}^{(\alpha_1,\beta_1)}_q,\hdots,\mathcal{L}^{(\alpha_m,\beta_m)}_q$ can be reduced to the form $\mathcal{L}^{(\alpha'_1,1)}_q,\hdots,\mathcal{L}^{(\alpha'_m,1)}_q$ by $\alpha'_i:=\alpha_i\beta_i^{-1}$ over $\mathbb{F}_q$, $1\leq i\leq m$, without affecting the stopping set distribution.
\end{theorem}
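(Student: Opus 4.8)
The plan is to derive the theorem as a direct consequence of Lemma~\ref{replacing_latin_squares}, which already guarantees that replacing one Latin square by another from the same equivalence class leaves the stopping set distribution untouched. The only genuine content I need to supply is the verification that the normalized pair $(\alpha_i\beta_i^{-1},1)$ lies in the same class as the original pair $(\alpha_i,\beta_i)$, together with a short bookkeeping argument that the $m$ replacements can be carried out one after another.

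First I would observe that since $\beta_i\in\mathbb{F}_q^*$, the inverse $\beta_i^{-1}$ exists, and since both $\alpha_i$ and $\beta_i^{-1}$ are nonzero, the product $\alpha_i':=\alpha_i\beta_i^{-1}$ is again in $\mathbb{F}_q^*$; hence $(\alpha_i',1)$ is a legitimate scale pair in $(\mathbb{F}_q^*)^2$. Next I would compute $\beta_i^{-1}\odot(\alpha_i,\beta_i)=(\alpha_i\beta_i^{-1},\beta_i\beta_i^{-1})=(\alpha_i\beta_i^{-1},1)=(\alpha_i',1)$, where $\odot$ is the elementwise multiplication from Lemma~\ref{simple_structured_MOLS}. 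By the definition of the equivalence classes $U_i=\{x\odot(\phi_i,1):x\in\mathbb{F}_q^*\}$, this shows that $(\alpha_i',1)$ and $(\alpha_i,\beta_i)$ belong to the same class, since one is an $\mathbb{F}_q^*$-multiple of the other.

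With this in hand, I would replace the Latin squares one index at a time. Replacing $\mathcal{L}^{(\alpha_i,\beta_i)}_q$ by $\mathcal{L}^{(\alpha_i',1)}_q$ is a substitution inside a single equivalence class, so by Lemma~\ref{replacing_latin_squares} the stopping set distribution is preserved at each step. The one point needing care --- and the only place where the argument is more than a restatement of the lemma --- is that each intermediate collection must still be a set of MOLS, since the lemma assumes this hypothesis before every application. This holds because replacing $(\alpha_i,\beta_i)$ by the class-equivalent pair $(\alpha_i',1)$ does not move the $i$-th square out of its class $U_i$, and the original squares were chosen from pairwise distinct classes; hence after each replacement the $m$ pairs still represent $m$ distinct classes, which by Lemma~\ref{simple_structured_MOLS} keeps them mutually orthogonal. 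Iterating over $i=1,\hdots,m$ therefore transforms the given MOLS into $\mathcal{L}^{(\alpha_1',1)}_q,\hdots,\mathcal{L}^{(\alpha_m',1)}_q$ with $\alpha_i'=\alpha_i\beta_i^{-1}$ and an unchanged stopping set distribution, as claimed. I do not expect any real obstacle here; the whole statement is essentially a normalization that rides on the class-invariance already established in Lemma~\ref{replacing_latin_squares}.
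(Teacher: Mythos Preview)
Your proposal is correct and follows essentially the same approach as the paper: verify that $(\alpha_i,\beta_i)$ and $(\alpha_i\beta_i^{-1},1)$ lie in the same equivalence class via the scalar $\beta_i^{-1}$, then invoke Lemma~\ref{replacing_latin_squares}. Your added bookkeeping that the intermediate collections remain MOLS is a harmless elaboration that the paper leaves implicit.
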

\begin{proof}
Since $(\alpha_i,\beta_i)=\beta_i\odot(\alpha'_i,1)$, both pairs are in the same equivalence class and thus $\mathcal{L}^{(\alpha_i,\beta_i)}_q$ can be replaced by $\mathcal{L}^{(\alpha_i\beta_i^{-1},1)}_q$ according to Lemma \ref{replacing_latin_squares}.
\end{proof}

As a consequence of Theorem \ref{reduced_form}, we may simplify our considerations to the representative system $\{(\alpha,1):\alpha\in\mathbb{F}_q^*\}$ and the associated MOLS $\{\mathcal{L}^{(\alpha,1)}_q : \ 1\leq\alpha\leq q-1\}$. Although this restriction is reasonable in order to improve the stopping set distributions (Section \ref{improved_codes}) of the arising codes, we can achieve low-complexity encoding by the appropriate choice of another representative system (Section~\ref{encoding}).

\begin{definition}
An \emph{$\ell$-polygon} $\mathscr{P}$ in the Latin square $\mathcal{L}^{(\alpha,\beta)}_q$ is a special subrectangle of size $\ell$ (with $\ell$ even) of the form
\begin{multline*}
\mathscr{P}=\{(i_1,j_1,s_1), (i_2,j_2,s_2), \hdots, (i_\ell,j_\ell,s_\ell) : \\
i_1=i_2,\: j_2=j_3,\: i_3=i_4,\hdots, i_{\ell-1}=i_\ell,\: j_{\ell}=j_1,\\ \mbox{and } \mathcal{L}^{(\alpha,\beta)}_q[i_k,j_k]=s_k,\ 1\leq k \leq \ell\}.
\end{multline*}
\end{definition}

\begin{theorem}
\label{polygon_condition}
For an $\ell$-polygon $\mathscr{P}$ in the Latin square $\mathcal{L}^{(\alpha,\beta)}_q$ with triples $(i_1,j_1,s_1),\hdots,(i_\ell,j_\ell,s_\ell)$ it must hold that
\[
\sum\limits_{1\leq t\leq \ell}(-1)^{t-1} s_t = 0\ \ \mbox{(over } \mathbb{F}_q \mbox{)}.
\]
\end{theorem}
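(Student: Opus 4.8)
The plan is to substitute the explicit entry formula of the Latin square and then exploit the alternating sign pattern so that the polygon's adjacency constraints force termwise cancellation. Since $\mathcal{L}^{(\alpha,\beta)}_q[i_k,j_k]=\alpha i_k+\beta j_k$, every symbol satisfies $s_t=\alpha i_t+\beta j_t$. Plugging this into the target expression and using linearity over $\mathbb{F}_q$, I would split the alternating sum into a row part and a column part,
\[
\sum_{t=1}^\ell (-1)^{t-1} s_t \;=\; \alpha \sum_{t=1}^\ell (-1)^{t-1} i_t \;+\; \beta \sum_{t=1}^\ell (-1)^{t-1} j_t,
\]
so that the claim reduces to showing that each of the two alternating index sums vanishes independently.

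For the row part, I would pair consecutive terms according to the row constraints $i_1=i_2,\; i_3=i_4,\;\hdots,\; i_{\ell-1}=i_\ell$. In each pair $(2k-1,2k)$ the odd index carries sign $+$ and the even index sign $-$, so the contribution $i_{2k-1}-i_{2k}$ is zero, and summing over all $\ell/2$ pairs kills the entire row part. For the column part the pairing follows the remaining constraints $j_2=j_3,\; j_4=j_5,\;\hdots$ together with the wraparound $j_\ell=j_1$; here each interior pair $(2k,2k+1)$ has its even index with sign $-$ and its odd index with sign $+$, giving $-j_{2k}+j_{2k+1}=0$, while the boundary terms combine as $j_1-j_\ell=0$. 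Hence both sums vanish and the total equals $\alpha\cdot 0+\beta\cdot 0=0$.

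The only genuine obstacle is the sign bookkeeping: one must verify that each of the two constraint families pairs precisely those cells whose alternating signs are opposite, and in particular that the wraparound $j_\ell=j_1$ is compatible with the parity of $\ell$. This is exactly where the hypothesis that $\ell$ is even enters, since it guarantees that the index $\ell$ is even (sign $-$) and so the leftover boundary terms $j_1$ (sign $+$) and $j_\ell$ (sign $-$) cancel rather than reinforce. Once this sign–parity alignment is confirmed, the identity follows at once from linearity over $\mathbb{F}_q$, and no further computation is needed.
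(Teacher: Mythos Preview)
Your proof is correct and follows exactly the same approach as the paper: substitute $s_t=\alpha i_t+\beta j_t$, split the alternating sum into a row part and a column part, and then use the polygon constraints $i_{2k-1}=i_{2k}$ and $j_{2k}=j_{2k+1}$ (with the wraparound $j_\ell=j_1$) to cancel each part termwise. Your discussion of the sign bookkeeping and the role of $\ell$ being even is in fact more explicit than the paper's own presentation.
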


\begin{proof}
From $s_t=\alpha i_t + \beta j_t$ (Lemma~\ref{simple_structured_MOLS}) it follows that
\begin{multline*}
\sum\limits_{t}(-1)^{t-1} s_t = \alpha\left(\sum\limits_{t}(-1)^{t-1} i_t\right)+\beta\left(\sum\limits_{t}(-1)^{t-1} j_t\right)\\
= \alpha(\underbrace{i_1-i_2}_{=0}+\underbrace{i_3-i_4}_{=0}+\hdots+\underbrace{i_{\ell-1}-i_\ell}_{=0})\\+\beta(\underbrace{-j_2+j_3}_{=0}\underbrace{-j_3+j_4}_{=0}-\hdots\underbrace{-j_\ell+j_1}_{=0})=0.
\end{multline*}
\end{proof}

\subsection{Properties of $\mathscr{L}^m_q$-TD LDPC codes}

\begin{definition}
Define $\mathscr{L}^1_q=\{\mathcal{L}^{(\alpha,\beta)}_q : (\alpha,\beta)\in(\mathbb{F}_q^*)^2\}$ and let $\mathscr{L}_q^m$ be the collection of all ordered $m$-subsets of $\mathscr{L}^1_q$ that are MOLS. Each transversal design arising from an element of $\mathscr{L}_q^m$ is referred to as an $\mathscr{L}^m_q$-TD and the resulting code as an $\mathscr{L}^m_q$-TD LDPC code. Note that the order of the Latin squares within a set of MOLS is irrelevant, since every order leads to the same code. However, for a clearer presentation of the paper we distinguish between these cases and consider only ordered sets.
\end{definition}

An $\mathscr{L}^m_q$-TD LDPC code has block length $N=q^2$, rate $R\geq (q-m-2)/q$, and the code's parity-check matrix is regular with column weight $m+2$ and row weight $q$. Furthermore, a subclass of these codes has quasi-cyclic structure (cf. Section~\ref{encoding}), leading to a low encoding complexity linear in the block length. The parity-check matrix of a  quasi-cyclic $\mathscr{L}^m_q$-TD LDPC code consists of $(m+2)\times q$ circulant submatrices (called circulants) of size $q\times q$. For a more flexible code design, we can use an arbritrary submatrix of $(m+2)\times a$ circulants with $1\leq a \leq q$ as a parity-check matrix, leading to a code of block length $N=aq$, rate $R\geq (a-m-2)/a$, column weight $m+2$ and row weight $a$. Hence, by varying $a$, we obtain a wide spectrum of block lengths and code rates for any prime power order $q$.

\subsection{Stopping Distance of $\mathscr{L}^1_q$-TD LDPC codes}

Let $\mathbb{F}_q$ be the Galois field of order $q$ and characteristic $\omega_q$ (i.e., $q=\omega_q^x$ for any integer $x\geq 1$), and let
$\mathcal{L}^{(\alpha)}_q\in \mathscr{L}^1_q$ be the Latin square with scale factors $\alpha\in \mathbb{F}^*_q$ and $\beta = 1$, which leads to an  $\mathscr{L}^1_q$-TD LDPC code of column weight 3. 
We now investigate the occurrence of small subrectangles in $\mathcal{L}^{(\alpha)}_q$, which result in small stopping sets. For this, we first consider all possible full subrectangles up to size $7$, displayed by Fig.~\ref{upto7cellconfigs} (where $a,b,c\in\mathbb{F}_q$), that might occur in any Latin square. 
The smallest full subrectangle is a square of four cells as depicted by (a).
Since it is a \mbox{4-polygon}, it must hold that $2a-2b=0$ (over $\mathbb{F}_q$) according to Theorem~\ref{polygon_condition}. It follows that $a=b$ for $\omega_q\neq 2$, which contradicts the properties of a Latin square. Thus, the 4-square can not occur in $\mathcal{L}^{(\alpha)}_q$ if $\omega_q\neq 2$, else (if $\omega_q = 2$) it occurs.
Fig.~\ref{upto7cellconfigs} \mbox{(b)-(j)} depicts all possible full subrectangles of size 6, of which the figures \mbox{(b)-(f)} form a 6-polygon.
With Theorem~\ref{polygon_condition}, we can exclude \mbox{(c)-(f)} from occurring in the Latin square $\mathcal{L}^{(\alpha)}_q$ if $\omega_q\neq 2,3$. For example, it holds for (c) that 
$3(a-b)= 0$ and thus $a=b$ for $\omega_q\neq 3$, a contradiction. 
Fig.~\ref{upto7cellconfigs}~(b) occurs in $\mathcal{L}^{(\alpha)}_q$ and thus results in a stopping set of size 6. An explicit subrectangle, which is isomorphic to (b), is illustrated by the left table of Fig.~\ref{fig:6cycle_config}.
Each of the rectangles \mbox{(g)-(j)} contain three 4-polygons and thus, a linear system of three equations and three variables must be satisfied. The solutions of these systems lead to invalid subrectangles if $\omega_q>3$ and thus can not occur in $\mathcal{L}^{(\alpha)}_q$ for $\omega_q>3$. 
The only occurring full subrectangle of size~7 is given by Fig.~\ref{upto7cellconfigs}~(k). An example is visualized by the right table of Fig.~\ref{fig:6cycle_config}, which is isomorphic to (k).
As a conclusion, we can avoid certain small subrectangles in $\mathcal{L}^{(\alpha)}_q$ (which correspond to harmful stopping sets), if $\omega_q>3$.

\begin{figure}[t!]
\renewcommand{\arraystretch}{1.2}
\centering
\subfloat{
	\begin{array}[b]{|c|ccc|}
	\hline
	      &0&\alpha&2\alpha\\
	\hline
	0   & - & \alpha & 2\alpha  \\
	1   & \alpha & - & 3\alpha \\
	2   & 2\alpha & 3\alpha  & -\\
	\hline
	\end{array}
}
\hspace{10pt}
\subfloat{
	\begin{array}[b]{|c|ccc|}
	\hline
	      &0&\alpha&2\alpha\\
	\hline
	0   & - & \alpha & 2\alpha  \\
	1   & \alpha & 2\alpha & 3\alpha \\
	2   & 2\alpha & 3\alpha  & -\\
	\hline
	\end{array}
}
\caption{Full subrectangles in $\mathcal{L}^{(\alpha)}_q$ of size 6 and 7}
\label{fig:6cycle_config}
\end{figure}

\begin{theorem}\label{stopdist_weight3}
For an $\mathscr{L}^1_{q}$-TD LDPC code, the stopping distance $s_{min}$ is given by
\[s_{min}=\begin{cases}
  4,  & \text{if } \omega_q=2,\\
  6, & \text{if } \omega_q>2,
\end{cases}\]
where $\omega_q$ is the characteristic of $\mathbb{F}_q$ (which must be prime).
\end{theorem}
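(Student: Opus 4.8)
The plan is to translate the problem entirely into the language of full subrectangles and then lean on the enumeration already prepared in Fig.~\ref{upto7cellconfigs}. By Theorem~\ref{link_configurations_to_latin_squares} with $m=1$, a stopping set of size $\ell$ in the $\mathscr{L}^1_q$-TD LDPC code corresponds to a single full subrectangle of size $\ell$ occurring in $\mathcal{L}^{(\alpha)}_q$; hence $s_{min}$ equals the size of the smallest full subrectangle that occurs in $\mathcal{L}^{(\alpha)}_q$. The whole proof therefore reduces to determining this minimal size as a function of the characteristic $\omega_q$.

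First I would dispose of all sizes below $6$. Full subrectangles of size $1,2,3$ are impossible, since fullness forces every row and every column to appear in at least two triples. For size $5$ the same counting shows at most two distinct rows and at most two distinct columns, so all triples lie in a $2\times 2$ grid of at most $4$ cells---a contradiction. For size $4$ the count forces exactly two rows and two columns, and fullness of the symbol set then pins the pattern down to the intercalate, i.e., the $4$-square of Fig.~\ref{upto7cellconfigs}(a). Being a $4$-polygon, Theorem~\ref{polygon_condition} requires $2(a-b)=0$ over $\mathbb{F}_q$. For $\omega_q\neq 2$ this yields $a=b$, contradicting the Latin-square property, so no full subrectangle of size $<6$ can occur; thus $s_{min}\geq 6$ whenever $\omega_q>2$.

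Next I would settle the two regimes by exhibiting a realizing subrectangle. When $\omega_q=2$ the relation $2(a-b)=0$ is automatic, and one checks directly that choosing rows $x_1\neq x_2$ together with columns satisfying $y_2-y_1=\alpha(x_1-x_2)$ produces a genuine $4$-square in $\mathcal{L}^{(\alpha)}_q$; together with the trivial bound $s_{min}\geq 4$ this gives $s_{min}=4$. When $\omega_q>2$ I would exhibit the size-$6$ full subrectangle isomorphic to Fig.~\ref{upto7cellconfigs}(b), explicitly the left table of Fig.~\ref{fig:6cycle_config}. Its polygon alternating sum is $a-b+c-a+b-c=0$ identically, so Theorem~\ref{polygon_condition} imposes \emph{no} constraint on the characteristic and the pattern is realizable for every $\omega_q>2$; checking that its three symbols are pairwise distinct (which holds even for $\omega_q=3$) confirms it is full. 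This gives $s_{min}\leq 6$, and combined with the lower bound yields $s_{min}=6$.

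The main obstacle is the completeness of the small-subrectangle enumeration: the claim that nothing of size $<6$ survives for $\omega_q>2$ hinges on knowing that Fig.~\ref{upto7cellconfigs} lists \emph{all} full subrectangles up to the relevant size, and on the counting argument that excludes size $5$ entirely and reduces size $4$ to the unique intercalate. A secondary subtlety is the sign bookkeeping in the alternating sum of Theorem~\ref{polygon_condition}: one must fix a valid polygon traversal of the chosen subrectangle and verify that the signed symbol sum vanishes identically (as for (b)) rather than reducing to a nontrivial multiple of $(a-b)$ (as for (c)--(f)), since only the former guarantees realizability independent of $\omega_q$.
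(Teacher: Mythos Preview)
Your proposal is correct and follows essentially the same approach as the paper: both reduce the question to finding the smallest full subrectangle in $\mathcal{L}^{(\alpha)}_q$, use the polygon condition (Theorem~\ref{polygon_condition}) to eliminate the $4$-square when $\omega_q>2$, and exhibit the explicit size-$6$ subrectangle of Fig.~\ref{fig:6cycle_config} (left) to realize $s_{min}=6$. You are in fact more explicit than the paper on two points the paper leaves implicit---the counting argument excluding sizes $\leq 5$ (the paper simply relies on the completeness of Fig.~\ref{upto7cellconfigs}) and the concrete realization of the $4$-square in characteristic~$2$---but these are refinements of the same argument rather than a different route.
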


\begin{proof}
As shown above, the size of the smallest full subrectangle for $\omega_q=2$ and $\omega_q>2$ is 4 and 6, respectively, which corresponds to the size of the smallest stopping set. 
\end{proof}

\subsection{Bounding the Stopping Distance of $\mathscr{L}^2_q$-TD LDPC codes}

Let $\mathbb{F}_q$ be the Galois field of order $q$ and characteristic $\omega_q$, and let $(\mathcal{L}^{(\alpha_1)}_q,\mathcal{L}^{(\alpha_2)}_q)\in\mathscr{L}^2_q$ be two orthogonal Latin squares with scale factors $(\alpha_1,1)$ and $(\alpha_2,1)$, respectively, where $\alpha_1,\alpha_2 \in \mathbb{F}^*_q$ and $\beta_1=\beta_2 =1$. These MOLS lead to an $\mathscr{L}^2_q$-TD LDPC code of column weight 4. 
The smallest full subrectangle of four cells, depicted by Fig.~\ref{upto7cellconfigs}~(a), does not have a coincident pair and thus can not lead to a stopping set of size 4. Also, there is no stopping set of size $5$, since there are no full subrectangles of this size. 
If $\omega_q>3$, the only full subrectangle of size 6 that occurs in the single Latin square $\mathcal{L}^{(\alpha_1)}_q$ is Fig.~\ref{upto7cellconfigs}~(b). To result in a stopping set of the $\mathscr{L}^2_q$-TD LDPC code, there must be a full-correlating subrectangle. The only coincident and orthogonal pair of (b) is (c) (which is also full), but this can not occur in $\mathcal{L}^{(\alpha_2)}_q$.
Among the coincident full subrectangles of size~7, depicted by Fig.~\ref{upto7cellconfigs} \mbox{(k)-(n)}, no orthogonal pairs exist and thus no stopping sets of size 7.

\begin{theorem}\label{stopdist_weight4}
For an $\mathscr{L}^2_q$-TD LDPC code, the stopping distance $s_{min}$ can be bounded by
\[s_{min}\geq \begin{cases}
  6,  & \text{if } \omega_q=2 \text{ or } 3,\\
  8, & \text{if } \omega_q>3.
\end{cases}\]
\end{theorem}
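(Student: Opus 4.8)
The plan is to count, for each size $\ell\in\{4,5,6,7\}$, the pairs of full-correlating subrectangles of that size that can actually occur in the two orthogonal Latin squares $\mathcal{L}^{(\alpha_1)}_q$ and $\mathcal{L}^{(\alpha_2)}_q$, and to show that none exists once $\omega_q>3$; by Theorem~\ref{link_configurations_to_latin_squares} each missing pair rules out a stopping set of the corresponding size, and the absence of such pairs for $\ell\le 7$ forces $s_{min}\ge 8$. The case $\omega_q\in\{2,3\}$ is immediate: every $m=2$ TD LDPC code has its smallest possible full-correlating subrectangle of size $6$ (as established in the discussion of Fig.~\ref{upto7cellconfigs} for two orthogonal Latin squares), so $s_{min}\ge 6$ with no further restriction on the characteristic. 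Thus the substance of the argument lies entirely in the case $\omega_q>3$.

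First I would dispose of the small sizes. By the enumeration in Fig.~\ref{upto7cellconfigs}, the only full subrectangle of size $4$ is the square (a), which admits no coincident partner on the same four cells and hence can never be one member of a full-correlating pair; and no full subrectangle of size $5$ exists at all. Consequently there is no stopping set of size $4$ or $5$ for any characteristic, and these two sizes need no characteristic-dependent reasoning.

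The key step is size $6$. Here I would first restrict which subrectangles survive in a single $\mathcal{L}^{(\alpha)}_q$ when $\omega_q>3$. Applying Theorem~\ref{polygon_condition} to the six-polygons (b)--(f), the alternating symbol sums force a symbol coincidence in each of (c)--(f) unless $\omega_q\in\{2,3\}$, so none of these occurs; the non-polygonal rectangles (g)--(j) each contain three $4$-polygons, whose associated linear system over $\mathbb{F}_q$ has no Latin-valid solution for $\omega_q>3$. Hence (b) is the unique size-$6$ full subrectangle that can occur in either Latin square. A size-$6$ stopping set would require a coincident, orthogonal full partner of (b); among the size-$6$ full subrectangles the only coincident orthogonal candidate is (c), which we have just excluded. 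Therefore no size-$6$ full-correlating pair exists, and no stopping set of size $6$ arises.

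Finally, for size $7$ I would invoke the same two ingredients: by Theorem~\ref{polygon_condition}, (k) is the only size-$7$ full subrectangle occurring in a single $\mathcal{L}^{(\alpha)}_q$ for $\omega_q>3$, and no two of the size-$7$ full subrectangles (k)--(n) form an orthogonal coincident pair, so no full-correlating pair of size $7$ exists either. Combining the four cases yields that the smallest stopping set for $\omega_q>3$ has size at least $8$, which together with the general bound for $\omega_q\in\{2,3\}$ gives the claim. I expect the main obstacle to be the size-$6$ and size-$7$ orthogonality bookkeeping: one must verify, using the definition of orthogonal correlating subrectangles together with the constraint $\mathcal{L}^{(\alpha)}_q[x,y]=\alpha x+y$, that the surviving subrectangle (b) (resp.\ (k)) genuinely has no admissible coincident orthogonal mate, rather than merely checking that its ``generic'' partner (c) fails the polygon condition.
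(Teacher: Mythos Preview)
Your proposal is correct and follows essentially the same approach as the paper: rule out sizes $4$ and $5$ by the absence of coincident partners/full subrectangles, then for $\omega_q>3$ use the polygon condition and the $4$-polygon systems to show that only (b) survives at size $6$ and its sole orthogonal coincident mate (c) is excluded, and finally observe that (k)--(n) contain no orthogonal coincident pair at size $7$. One minor simplification the paper makes at size $7$: it does not bother to isolate (k) as the only surviving subrectangle in $\mathcal{L}^{(\alpha)}_q$, but simply notes that among all of (k)--(n) there is no orthogonal coincident pair at all, which suffices regardless of characteristic.
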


\begin{proof}
As shown above, there is no stopping set of size $l < 6$ and thus, $s_{min}\geq 6$ must hold in general. If $\omega_q>3$, we can exclude further pairs of full-correlating subrectangles up to size 8 and thus, there can not be any stopping set of size $l < 8$, giving $s_{min}\geq 8$.
\end{proof}

\section{TD LDPC Codes with Improved\\Stopping Set Distributions}\label{improved_codes}

\subsection{Avoidance of Small Stopping Sets in $\mathscr{L}^2_q$-TD LDPC codes}\label{avoiding_small_stopsets}

The stopping distance of any column-weight-4 $\mathscr{L}^2_q$-TD LDPC code is \mbox{$s_{min} \geq 8$}  for $\omega_q>3$ (where $q$ is the characteristic of $\mathbb{F}_q$) and thus, the smallest stopping sets may have size 8. By simulations, we revealed only two types of full-correlating subrectangles of this size (Fig.~\ref{stopsets_size_8}) that may occur in any Latin square of $\mathscr{L}^2_q$ for $\omega_q>3$. Let $(\mathcal{L}^{(\alpha_1)}_q,\mathcal{L}^{(\alpha_2)}_q)\in\mathscr{L}^2_q$. We claim that by an appropriate choice of $\alpha_1$ and $\alpha_2$ both substructures can be avoided. We demonstrate this for the first type of Fig.~\ref{stopsets_size_8}, given by the full-correlating subrectangles $\mathscr{C}_1=\{(x_1,y_1,a),(x_1,y_2,b),\hdots,(x_4,y_4,d)\}$ in $\mathcal{L}^{(\alpha_1)}_q$ and $\mathscr{C}_2=\{(x_1,y_1,\gamma),(x_1,y_2,\delta),\hdots,(x_4,y_4,\delta)\}$ in $\mathcal{L}^{(\alpha_2)}_q$.

For each triple $(x_i, y_j, s)$ of $\mathscr{C}_1$, it holds that $\alpha_1 x_i + y_j = s$  (over $\mathbb{F}_q$) according to Lemma \ref{simple_structured_MOLS}. Hence, for every two triples $(x_i, y_j, s)$ and $(x_k, y_l, s)$ with the same symbol $s$ we can set up an equation $\alpha_1 x_i + y_j - \alpha_1 x_k - y_l = 0$. Analogously, we can derive equations for $\mathscr{C}_2$. Altogether, we obtain the following system of linear equations (over $\mathbb{F}_q$):
\begin{eqnarray*}
\alpha_1 x_1 + y_1 - \alpha_1 x_3 - y_4 = 0,\ \ \ \alpha_2 x_1 + y_1 - \alpha_2 x_2 - y_3 = 0,\\
\alpha_1 x_1 + y_2 - \alpha_1 x_3 - y_3 = 0,\ \ \ \alpha_2 x_1 + y_2 - \alpha_2 x_4 - y_4 = 0,\\
\alpha_1 x_2 + y_2 - \alpha_1 x_4 - y_1 = 0,\ \ \ \alpha_2 x_2 + y_2 - \alpha_2 x_3 - y_4 = 0,\\
\alpha_1 x_2 + y_3 - \alpha_1 x_4 - y_4 = 0,\ \ \ \alpha_2 x_3 + y_3 - \alpha_2 x_4 - y_1 = 0.\,
\end{eqnarray*}
By Gaussian elimination, we obtain $(2\alpha_1-\alpha_2)(y_3-y_4) = 0$ and thus $y_3 = y_4$ or $2\alpha_1-\alpha_2 = 0$. Since $y_3$ and $y_4$ represent columns, they must be distinct. The second condition can be prevented from being satisfied by a proper choice of $\alpha_1$ and $\alpha_2$. More precisely, the given substructure can not occur in $(\mathcal{L}^{(\alpha_1)}_q,\mathcal{L}^{(\alpha_2)}_q)\in\mathscr{L}^2_q$, if
\begin{align*}
2 \alpha_1 - \alpha_2 & \neq 0, \mbox{ and} \tag{C1}\\
2 \alpha_2 - \alpha_1 & \neq 0\ \ (\mbox{over }\mathbb{F}_q). \tag{C2}
\end{align*}
The constraint C2 must be satisfied, since the roles of $\alpha_1$ and $\alpha_2$ can be interchanged.
Analogously, the second type of full-correlating subrectangles (Fig.~\ref{stopsets_size_8}) can be avoided, if
\begin{align*}
\alpha_1 + \alpha_2 & \neq 0\ (\mbox{over }\mathbb{F}_q). \tag{C3}
\end{align*}

The simulations also showed that there does not occur any stopping set of size 9 for $\omega_q > 3$. 
As a consequence, since the stopping sets of size 8 can be avoided, we have raised the stopping distance from 8 to 10 by a proper choice of the scale factors for $\omega_q > 3$ .

\begin{figure}[t!]
\setlength{\arraycolsep}{2.5pt}
\renewcommand{\arraystretch}{1.2}
\subfloat{
	\scalebox{0.95}{$
	\begin{array}[t]{|c|cccc|}
	\hline
	      &y_1&y_2&y_3&y_4\\
	\hline 
	x_1   & (a,\gamma) & (b,\delta) & - & -\\
	x_2   & - & (c,\zeta) & (d,\gamma) & -\\
	x_3   & - & - & (b,\eta) & (a,\zeta)\\
	x_4   & (c,\eta) & - & - & (d,\delta)\\
	\hline
	\end{array}
	$}
}
\subfloat{
	\scalebox{0.95}{$
	\begin{array}[t]{|c|cccc|}
	\hline
	      &y_1&y_2&y_3&y_4\\
	\hline 
	x_1   & (a,\gamma) & (b,\delta) & - & -\\
	x_2   & (c,\zeta) & (d,\eta) & - & -\\
	x_3   & - & - & (b,\zeta) & (d,\gamma)\\
	x_4   & - & - & (a,\eta) & (c,\delta)\\
	\hline
	\end{array}
	$}
}
\caption{Two types of full-correlating subrectangles of size 8 in $\mathscr{L}^2_q$. The first elements of the symbol pairs belong to a subrectangle with symbol set $\{a,b,c,d\}$ and the second elements to a full-correlating counterpart with symbol set $\{\gamma,\delta,\zeta,\eta\}$.}
\label{stopsets_size_8}
\end{figure}

\definecolor{light-gray}{gray}{0.8}
\newcommand{\gr}[1]{\textcolor{light-gray}{#1}}

\captionsetup[subfloat]{margin=10pt,format=hang}

\begin{figure*}[t!]
\renewcommand{\thesubfigure}{\alph{subfigure}}
\setlength{\arraycolsep}{3pt}
\centerline{
\subfloat[]{
\renewcommand{\arraystretch}{1.2}
		\begin{array}[b]{|lllllll|}
		\hline
		\:\gr{0}\ &\gr{1}\ &\gr{2}\ &\gr{3}\ &\gr{4}\ &\gr{5}\ &\gr{6}\:\\
		\:\gr{1}&\gr{2}&\gr{3}&\gr{4}&5^*&6^*&\gr{0}\\
		\:\gr{2}&\gr{3}&\textbf{4}&\textbf{5}^*&\gr{6}&0^*&\gr{1}\\
		\:\gr{3}&\textbf{4}&\gr{5}&\textbf{6}^*&0^*&\gr{1}&\gr{2}\\
		\:\gr{4}&\textbf{5}&\textbf{6}&\gr{0}&\gr{1}&\gr{2}&\gr{3}\\
		\:\gr{5}&\gr{6}&\gr{0}&\gr{1}&\gr{2}&\gr{3}&\gr{4}\\
		\:\gr{6}&\gr{0}&\gr{1}&\gr{2}&\gr{3}&\gr{4}&\gr{5}\\
		\hline
		\end{array}\begin{array}[b]{|lllllll|}
		\hline
		\:\gr{0}\ &\gr{1}\ &\gr{2}\ &\gr{3}\ &\gr{4}\ &\gr{5}\ &\gr{6}\:\\
		\:\gr{2}&\gr{3}&\gr{4}&\gr{5}&6^*&0^*&\gr{1}\\
		\:\gr{4}&\gr{5}&\textbf{6}&\textbf{0}^*&\gr{1}&2^*&\gr{3}\\
		\:\gr{6}&\textbf{0}&\gr{1}&\textbf{2}^*&3^*&\gr{4}&\gr{5}\\
		\:\gr{1}&\textbf{2}&\textbf{3}&\gr{4}&\gr{5}&\gr{6}&\gr{0}\\
		\:\gr{3}&\gr{4}&\gr{5}&\gr{6}&\gr{0}&\gr{1}&\gr{2}\\
		\:\gr{5}&\gr{6}&\gr{0}&\gr{1}&\gr{2}&\gr{3}&\gr{4}\\
		\hline
		\end{array}
}
\hspace{20pt}
\subfloat[]{
\renewcommand{\arraystretch}{1.2}
		\begin{array}[b]{|lllllll|}
		\hline
		\:\gr{0}\ &\gr{1}\ &\gr{2}\ &\gr{3}\ &\gr{4}\ &\gr{5}\ &\gr{6}\:\\
		\:\gr{1}&\gr{2}&\textbf{3}&\textbf{4}&\gr{5}&\gr{6}&\gr{0}\\
		\:\gr{2}&\gr{3}&\gr{4}&\gr{5}&\gr{6}&\gr{0}&\gr{1}\\
		\:\textbf{3}&\gr{4}&\gr{5}&\textbf{6}&\gr{0}&1^*&2^*\\
		\:\textbf{4}&\gr{5}&\textbf{6}&\gr{0}&\gr{1}&\gr{2}&\gr{3}\\
		\:\gr{5}&\gr{6}&\gr{0}&1^*&\gr{2}&\gr{3}&4^*\\
		\:\gr{6}&\gr{0}&\gr{1}&2^*&\gr{3}&4^*&\gr{5}\\
		\hline
		\end{array}\begin{array}[b]{|lllllll|}
		\hline
		\:\gr{0}\ &\gr{1}\ &\gr{2}\ &\gr{3}\ &\gr{4}\ &\gr{5}\ &\gr{6}\:\\
		\:\gr{2}&\gr{3}&\textbf{4}&\textbf{5}&\gr{6}&\gr{0}&\gr{1}\\
		\:\gr{4}&\gr{5}&\gr{6}&\gr{0}&\gr{1}&\gr{2}&\gr{3}\\
		\:\textbf{6}&\gr{0}&\gr{1}&\textbf{2}&\gr{3}&4^*&5^*\\
		\:\textbf{1}&\gr{2}&\textbf{3}&\gr{4}&\gr{5}&\gr{6}&\gr{0}\\
		\:\gr{3}&\gr{4}&\gr{5}&6^*&\gr{0}&\gr{1}&2^*\\
		\:\gr{5}&\gr{6}&\gr{0}&1^*&\gr{2}&3^*&\gr{4}\\
		\hline
		\end{array}
}
}
\caption{Full-correlating subrectangles in $(\mathcal{L}^{(1)}_7,\mathcal{L}^{(2)}_7)\in\mathscr{L}^2_7$ composed of small subrectangles of size 6.}
\label{fig:example_config_extension}
\end{figure*}

\subsection{Obtaining Full-Correlating Subrectangles in $\mathscr{L}^2_q$}

In this subsection, we describe a technique for constructing full-correlating subrectangles in MOLS which correspond to stopping sets that are harmful for the decoding performance of the arising $\mathscr{L}^2_q$-TD LDPC codes. First, we introduce the concept of translations in Definition~\ref{translations}. We then build up full-correlating subrectangles in $\mathscr{L}^2_q$ by the composition of translations in Theorem~\ref{duplicating_configurations}. In Subsection~\ref{analytical optimization}, we finally show how to derive scale factors based on these subrectangles in order to improve the stopping set distribution.

\begin{definition}\label{translations}
Let $\mathscr{C}$ be a subrectangle \mbox{in $\mathcal{L}^{(\alpha)}_q$}. For any pair $(i,j)\in \mathbb{F}_q^2$, we obtain a \emph{translation}
\mbox{$\mathscr{C}+(i,j)$} \mbox{in $\mathcal{L}^{(\alpha)}_q$} by
\[
\mathscr{C}+(i,j):=\{(x+i,\ y+j,\ s+\alpha i + j) : (x,y,s)\in \mathscr{C}\}
\]
over $\mathbb{F}_q$. Clearly, $\mathscr{C}+(i,j)$ is full if $\mathscr{C}$ is full.
\end{definition}

\begin{lemma}\label{lemmaTranslation}
A translation $\mathscr{C}+(i,j)$ has the same symbols as $\mathscr{C}$ at the displaced positions if $\alpha i + j = 0$. 
\end{lemma}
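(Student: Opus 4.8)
The plan is to read the claim off directly from the definition of translation, since the statement is essentially a restatement of how the symbol component transforms. First I would recall from Definition~\ref{translations} that each triple $(x,y,s)\in\mathscr{C}$ is sent to the triple $(x+i,\,y+j,\,s+\alpha i+j)$ of $\mathscr{C}+(i,j)$. The \emph{displaced position} of $(x,y)$ is therefore $(x+i,y+j)$, and the symbol that $\mathscr{C}+(i,j)$ places at that displaced position is exactly $s+\alpha i+j$.

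Next I would impose the hypothesis $\alpha i+j=0$ and substitute it into the symbol component, obtaining $s+\alpha i+j=s+0=s$ over $\mathbb{F}_q$. This is precisely the symbol that $\mathscr{C}$ carries at the original position $(x,y)$. Because this reduction applies uniformly to every triple of $\mathscr{C}$, the translation reproduces each symbol of $\mathscr{C}$ verbatim at the corresponding displaced cell, which is what the lemma asserts. In particular the symbol set is preserved, $\mathcal{S}_{\mathscr{C}+(i,j)}=\mathcal{S}_{\mathscr{C}}$.

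For completeness I would verify that $\mathscr{C}+(i,j)$ is a legitimate subrectangle of $\mathcal{L}^{(\alpha)}_q$: using $\mathcal{L}^{(\alpha)}_q[x,y]=\alpha x+y$ from Lemma~\ref{simple_structured_MOLS}, the cell $(x+i,y+j)$ genuinely holds $\alpha(x+i)+(y+j)=(\alpha x+y)+\alpha i+j=s+\alpha i+j$, matching the symbol component prescribed by the definition. There is essentially no obstacle to overcome here; the only point worth emphasizing is that the identity is taken over $\mathbb{F}_q$, so $\alpha i+j=0$ is the field-theoretic condition under which the additive shift of the symbols vanishes and the displaced cells inherit the original entries unchanged.
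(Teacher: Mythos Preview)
Your proof is correct and follows exactly the same approach as the paper: both read the conclusion off directly from Definition~\ref{translations}, observing that the symbol component $s+\alpha i+j$ reduces to $s$ under the hypothesis $\alpha i+j=0$. Your write-up is in fact more detailed than the paper's, which dispatches the lemma in a single sentence.
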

\begin{proof}
This results directly from Definition~\ref{translations}, since every triple $ (x,y,s)\in \mathscr{C}$ leads to a triple $(x+i,y+i,s)\in \mathscr{C}+(i,j)$ with the same symbol $s$.
\end{proof}

\begin{theorem}\label{duplicating_configurations}
Let $(\mathcal{L}^{(\alpha_1)}_q,\mathcal{L}^{(\alpha_2)}_q)\in\mathscr{L}^2_q$ be two MOLS, and let $\mathscr{C}_1$ be any full subrectangle of \mbox{size $\ell$} \mbox{in $\mathcal{L}^{(\alpha_1)}_q$} and $\mathscr{C}_2$ the correlating (generally not full) subrectangle \mbox{in $\mathcal{L}^{(\alpha_2)}_q$} of the same size.
Then, we obtain a pair of full-correlating subrectangles $(\Psi_1,\Psi_2)$ with $\Psi_1 = \mathscr{C}_1 \cup (\mathscr{C}_1+(i,j))$ \mbox{in $\mathcal{L}^{(\alpha_1)}_q$} and 
$\Psi_2 = \mathscr{C}_2 \cup (\mathscr{C}_2+(i,j))$ \mbox{in $\mathcal{L}^{(\alpha_2)}_q$} for any $(i,j)\in\mathbb{F}^2_q\setminus\{(0,0)\}$ with  $i\alpha_2 + j = 0$ (over $\mathbb{F}_q$).
The size $\ell'$ of $\Psi_1$ and $\Psi_2$ can be bounded by $\ell+|\mathcal{S}^*_{\mathscr{C}_2}|\leq \ell' \leq 2\ell$, where $\mathcal{S}^*_{\mathscr{C}_2}=\{s\in \mathcal{S}_{\mathscr{C}_2} : s  \text{ occurs exactly once in the triples of } \mathscr{C}_2\}$.
\end{theorem}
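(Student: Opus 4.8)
The plan is to verify the three required properties of $(\Psi_1,\Psi_2)$ in turn---that each $\Psi_i$ is full, that they are correlating, and that the claimed size bounds hold---using Lemma~\ref{lemmaTranslation} as the central tool to control the symbols of the translate in $\mathcal{L}^{(\alpha_2)}_q$. The chosen displacement $(i,j)$ satisfies $\alpha_2 i + j = 0$, so Lemma~\ref{lemmaTranslation}, applied in $\mathcal{L}^{(\alpha_2)}_q$, tells us that $\mathscr{C}_2+(i,j)$ carries exactly the same symbols as $\mathscr{C}_2$, merely at the displaced cells; in particular $\mathcal{S}_{\mathscr{C}_2+(i,j)}=\mathcal{S}_{\mathscr{C}_2}$. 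This is the key leverage, since $\mathscr{C}_2$ itself need not be full.

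For fullness of $\Psi_1$, I would note that $\mathscr{C}_1$ is full by hypothesis and $\mathscr{C}_1+(i,j)$ is full by Definition~\ref{translations}. Every row, column, or symbol occurring in the union $\Psi_1$ already occurs in one of the two constituents, where it has at least two distinct witnessing triples; these witnesses survive in the set union, so $\Psi_1$ is full. For $\Psi_2$, the row and column fullness is inherited directly, since $\Psi_2$ covers the same cells as $\Psi_1$ and hence has the same (full) row and column sets as $\mathscr{C}_1$. The symbol fullness is the delicate point and is where the choice of $(i,j)$ pays off: a symbol occurring at least twice in $\mathscr{C}_2$ is already covered, while a symbol $s$ occurring exactly once in $\mathscr{C}_2$, say at cell $(x,y)$, reappears with the same value $s$ in $\mathscr{C}_2+(i,j)$ at the displaced cell $(x+i,y+j)$. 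As $(i,j)\neq(0,0)$, these cells are distinct, so $s$ occurs at least twice in $\Psi_2$. Thus every symbol of $\mathcal{S}_{\Psi_2}=\mathcal{S}_{\mathscr{C}_2}$ occurs at least twice and $\Psi_2$ is full.

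To see that $(\Psi_1,\Psi_2)$ is correlating, I would observe that $\mathscr{C}_1$ and $\mathscr{C}_2$ cover the same cells by hypothesis, and translation by the fixed $(i,j)$ sends a common cell set to a common cell set, so $\mathscr{C}_1+(i,j)$ and $\mathscr{C}_2+(i,j)$ again share their cells; hence $\Psi_1$ and $\Psi_2$ occupy identical cell positions. Orthogonality is then automatic from the orthogonality of $\mathcal{L}^{(\alpha_1)}_q$ and $\mathcal{L}^{(\alpha_2)}_q$, so $(\Psi_1,\Psi_2)$ is indeed a pair of full-correlating subrectangles.

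Finally, for the size bounds I would compute $\ell'$ through $\Psi_2$, which has the same size as $\Psi_1$ since the two are coincident. The upper bound $\ell'\leq 2\ell$ is immediate, as $\Psi_2$ is the union of the two size-$\ell$ sets $\mathscr{C}_2$ and $\mathscr{C}_2+(i,j)$. For the lower bound, each symbol $s\in\mathcal{S}^*_{\mathscr{C}_2}$ occupies a unique cell $(x,y)$ in $\mathscr{C}_2$; its translate $(x+i,y+j,s)$ lies in $\mathscr{C}_2+(i,j)$ but cannot already belong to $\mathscr{C}_2$, for otherwise $s$ would occur twice in $\mathscr{C}_2$, contradicting $s\in\mathcal{S}^*_{\mathscr{C}_2}$. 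These translates carry distinct symbols and are therefore pairwise distinct, contributing at least $|\mathcal{S}^*_{\mathscr{C}_2}|$ triples to $\Psi_2\setminus\mathscr{C}_2$, whence $\ell'\geq\ell+|\mathcal{S}^*_{\mathscr{C}_2}|$. The main obstacle throughout is the symbol bookkeeping for $\Psi_2$: the construction succeeds precisely because the condition $\alpha_2 i + j = 0$ forces the translate to reuse the old symbols, converting each singly-occurring symbol---the very defect that spoils fullness of $\mathscr{C}_2$---into a matched pair, while the same displacement produces the genuinely new triples counted in the lower bound.
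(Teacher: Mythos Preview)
Your proof is correct and follows the same approach as the paper, which in fact only sketches the argument and defers the details as ``very technical''; your write-up fills in precisely the steps the paper outlines---using Lemma~\ref{lemmaTranslation} to force the translate of $\mathscr{C}_2$ to reuse the same symbols, and then verifying that each unique symbol contributes a genuinely new cell to the union. One minor slip: $\Psi_2$ shares its row and column sets with $\Psi_1$, not with $\mathscr{C}_1$ (the translate may introduce new rows and columns), but your fullness argument still goes through since it only relies on $\Psi_1$ being full in rows and columns.
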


\begin{proof}
The complete proof has been omitted, since it is very technical. The main idea is to translate $\mathscr{C}_2$ (which is not full due to some unique symbols) in such a way that the translation $\mathscr{C}_2+(i,j)$ has the same symbols at the displaced positions and thus doubles the unique symbols $\mathcal{S}^*_{\mathscr{C}_2}$ which occur only once in $\mathscr{C}_2$. The translation achieves this doubling, if $i\alpha_2 + j = 0$ (Lemma~\ref{lemmaTranslation}), even if the translation overlaps with $\mathscr{C}_2$ at several cell positions. It can easily be verified that there can not be a shared cell holding a symbol $s\in\mathcal{S}^*_{\mathscr{C}_2}$ and thus, any unique symbol lead to a new cell doubling this symbol. A better understanding should be gained by considering the subsequent example. 
\end{proof}

\begin{example}
Fig. \ref{fig:example_config_extension} (a) and (b) show full-correlating pairs of subrectangles of size $\ell ' = 10$ and $\ell ' = 12$, respectively, which are composed of correlating subrectangles $(\mathscr{C}_1,\mathscr{C}_2)$ of size 6 (marked in bold) and their translations of the same size (marked with an asterisk).
In Fig.~\ref{fig:example_config_extension}~(a), the unique symbols $\mathcal{S}^*_{\mathscr{C}_2}=\{3,6\}$ are doubled by the translation of  $\mathscr{C}_2$, leading to a full subrectangle. Since $\mathscr{C}_1$ and $\mathscr{C}_2$ overlap with their translations at two positions, we obtain 
full-correlating subrectangles of size 10 which is smaller than the upper bound $2\ell=12$ given in Theorem \ref{duplicating_configurations}.
By contrast, Fig.~\ref{fig:example_config_extension}~(b) demonstrates that the upper bound can be reached, since $\ell'=2\ell=12$. Every symbol occurring in $\mathscr{C}_2$ is unique, i.e., $\mathcal{S}_{\mathscr{C}_2}=\mathcal{S}^*_{\mathscr{C}_2}=\{1,\hdots,6\}$, and thus each of these symbols is guaranteed to be doubled by a separate cell. 
\end{example}

\begin{figure}[b!]
\renewcommand{\arraystretch}{1.5}
\renewcommand{\thesubfigure}{\alph{subfigure}}
\subfloat[Variable representation of all 6-polygons in $\mathcal{L}^{(\alpha_1)}_q$]{
	\scalebox{0.8}{$
	\begin{array}[t]{|c|ccc|}
	\hline
	        & y_1 & -\alpha_1 x_2 + \alpha_1 x_3 + y_1 & \alpha_1 x_1 - \alpha_1 x_2 + y_1\\
	\hline
	x_1   & \alpha_1 x_1 + y_1  & \alpha_1 x_1 - \alpha_1 x_2 + \alpha_1 x_3 + y_1 & - \\
	x_2   & - & \alpha_1 x_3 + y_1  & \alpha_1 x_1 + y_1 \\
	x_3   & \alpha_1 x_3 + y_1 & -  &  \alpha_1 x_1 - \alpha_1 x_2 + \alpha_1 x_3 + y_1\\
	\hline
	\end{array}
	$}
}

\subfloat[Variable representation of all correlating pairs of (a) in $\mathcal{L}^{(\alpha_2)}_q$]{
	\scalebox{0.8}{$
	\begin{array}[t]{|c|ccc|}
	\hline
	        & y_1 & -\alpha_1 x_2 + \alpha_1 x_3 + y_1 & \alpha_1 x_1 - \alpha_1 x_2 + y_1\\
	\hline
	x_1   & \alpha_2 x_1 + y_1  & \alpha_2 x_1 - \alpha_1 x_2 + \alpha_1 x_3 + y_1 & -  \\
	x_2   & - & \alpha_2 x_2 - \alpha_1 x_2 + \alpha_1 x_3 + y_1  & \alpha_2 x_2 + \alpha_1 x_1 -\alpha x_2 + y_1 \\
	x_3   & \alpha_2 x_3 + y_1 & -  &  \alpha_2 x_3 + \alpha_1 x_1 - \alpha_1 x_2 + y_1\\
	\hline
	\end{array}
	$}
}

\setcounter{subfigure}{0}
\renewcommand{\thesubfigure}{\roman{subfigure}}
\renewcommand{\arraystretch}{1}
\setlength{\arraycolsep}{2pt}
\subfloat[]{
	\begin{array}[b]{|ccc|}
	\hline
	\circledast & * & -\\
	- & \circledast & *\\
	* & - & *\\
	\hline
	\end{array}
}
\subfloat[]{
	\begin{array}[b]{|ccc|}
	\hline
	\circledast & * & -\\
	- & * & *\\
	* & - & \circledast\\
	\hline
	\end{array}
}
\subfloat[]{
	\begin{array}[b]{|ccc|}
	\hline
	* & \circledast & -\\
	- & * & *\\
	\circledast & - & *\\
	\hline
	\end{array}
}
\subfloat[]{
	\begin{array}[b]{|ccc|}
	\hline
	* & * & -\\
	- & * & \circledast\\
	\circledast & - & *\\
	\hline
	\end{array}
}
\subfloat[]{
	\begin{array}[b]{|ccc|}
	\hline
	* & \circledast & -\\
	- & * & \circledast\\
	* & - & *\\
	\hline
	\end{array}
}
\subfloat[]{
	\begin{array}[b]{|ccc|}
	\hline
	* & * & -\\
	- & \circledast & *\\
	* & - & \circledast\\
	\hline
	\end{array}
}
\caption{Subfigures (a) and (b) represent correlating pairs of subrectangles which are 6-polygons. Subfigures (i)-(vi) show the pairs of entries that might be equal in the subrectangles represented by (b).}
\label{fig:representation_6sided_polygon}
\end{figure}

\subsection{Analytical Optimization of $\mathscr{L}^2_q$-TD LDPC codes}\label{analytical optimization}

The full-correlating pairs of subrectangles arising from Theorem~\ref{duplicating_configurations} lead to unavoidable stopping sets in $\mathscr{L}^2_q$-TD LDPC codes. 
Since \mbox{$\ell+|\mathcal{S}^*_{\mathscr{C}_2}|$} is a lower bound for the size of these stopping sets, it is important to maximize the number of unique symbols $\mathcal{S}^*_{\mathscr{C}_2}$ for as many correlating subrectangles $(\mathscr{C}_1,\mathscr{C}_2)$ (of which $\mathscr{C}_1$ is full) as possible. We demonstrate this maximization with the 6-polygon illustrated by Fig.~\ref{upto7cellconfigs}~(b).
Fig.~\ref{fig:representation_6sided_polygon}~(a) represents all subrectangles that are isomorphic to this 6-polygon in $\mathcal{L}^{(\alpha_1)}_q$ and
(b) represents all correlating pairs in the orthogonal Latin square $\mathcal{L}^{(\alpha_2)}_q$. Hence, by the same choice of $x_1, x_2, x_3,y_1\in \mathbb{F}_q$ such that $x_1<x_2<x_3$, we obtain a correlating pair of subrectangles, denoted by $(\mathscr{C}_1, \mathscr{C}_2)$. Note that $\mathscr{C}_1$ in $\mathcal{L}^{(\alpha_1)}_q$ is full, while $\mathscr{C}_2$ in $\mathcal{L}^{(\alpha_2)}_q$ is generally not full.

Fig.~\ref{fig:representation_6sided_polygon}~(i)-(vi) show the possible combinations of entries of $\mathscr{C}_2$ that might be equal. Note that several combinations can not appear, since $\mathscr{C}_2$ must be orthogonal to $\mathscr{C}_1$. We now equate the respective entries of the variable representation of $\mathscr{C}_2$ given by Fig.~\ref{fig:representation_6sided_polygon}~(b), leading to the following equations (over $\mathbb{F}_q$):
\begin{eqnarray*}
\mbox{(i)} & \alpha_2 x_1 + (\alpha_1-\alpha_2) x_2 - \alpha_1 x_3 & = 0, \\
\mbox{(ii)} & (\alpha_2-\alpha_1) x_1 + \alpha_1 x_2 - \alpha_2 x_3 & = 0,\\
\mbox{(iii)} & -\alpha_2 x_1 + \alpha_1 x_2 + (\alpha_2 - \alpha_1) x_3 & = 0,\\
\mbox{(iv)} & -\alpha_1 x_1 + (\alpha_1-\alpha_2) x_2 + \alpha_2 x_3 & = 0,\\
\mbox{(v)} & (\alpha_2-\alpha_1) x_1 - \alpha_2 x_2 + \alpha_1 x_3 & = 0,\\
\mbox{(vi)} & -\alpha_1 x_1 + \alpha_2 x_2 + (\alpha_1-\alpha_2) x_3 & = 0.
\end{eqnarray*}

Every equation can be satisfied, for example, (i) by $x_1=\alpha_1$, $x_2=0$ and $x_3=\alpha_2$. 
Our approach is now to derive conditions that may prevent any two equations from being satisfied simultaneously. 
For example, by considering the linear system of the equations (i) and (ii) it follows that $(\alpha_1^2 -\alpha_1 \alpha_2 + \alpha_2^2)(x_3-x_2) = 0$ and thus $x_2 = x_3$ or $\alpha_1^2 -\alpha_1 \alpha_2 + \alpha_2^2 = 0$. The first condition can never be satisfied, since $x_2$ and $x_3$ represent two rows that must be distinct. The second condition depends only on the scale factors $\alpha_1$ and $\alpha_2$, and thus can be prevented from being satisfied by a proper choice of these scale factors. 
Similarly, we can derive such a condition for any pair of equations, leading to the following unique list of constraints that should be satisfied over $\mathbb{F}_q$:
\begin{align*}
2 \alpha_1 - \alpha_2 & \neq 0, \tag{C1}\\
2 \alpha_2 - \alpha_1 & \neq 0, \tag{C2}\\
\alpha_1 + \alpha_2 & \neq 0, \tag{C3}\\
\alpha_1^2 - \alpha_1 \alpha_2 + \alpha_2^2 & \neq 0. \tag{C4}
\end{align*}

By satisfying the constraints C1-C4 we can ensure that $|\mathcal{S}^*_{\mathscr{C}_2}|\geq 4$, since only two cells of $\mathscr{C}_2$ may have the same symbol. Hence, by applying Theorem~\ref{duplicating_configurations} to the correlating subrectangles $(\mathscr{C}_1,\mathscr{C}_2)$, we obtain full-correlating subrectangles $(\Psi_1,\Psi_2)$ of size $\ell'\geq \ell+|\mathcal{S}^*_{\mathscr{C}_2}|=10$.
We note that the first three constraints are the same as obtained in Subsection~\ref{avoiding_small_stopsets}.

\subsection{Simulation-based Optimization of $\mathscr{L}^2_q$-TD LDPC codes}

By exhaustive computer search, we collected a large number of size-10 stopping sets that occurred in various $\mathscr{L}^2_q$-TD LDPC codes. Based on these stopping sets, we revealed further frequently occurring constraints that should be satisfied:
\begin{align*}
\alpha_1^2+\alpha_1\alpha_2-\alpha_2^2 & \neq 0,\tag{C5}\\
\alpha_2^2+\alpha_1\alpha_2-\alpha_1^2 & \neq 0,\tag{C6}\\
\alpha_1^2-3\alpha_1\alpha_2+\alpha_2^2 & \neq 0.\tag{C7}
\end{align*}

\section{Quasi-Cyclic TD LDPC Codes}\label{encoding}

Let $\mathbb{F}_p$ be the Galois field of a prime order $p$ and let $(\mathcal{L}^{(\alpha_1,1)}_p,\hdots,\mathcal{L}^{(\alpha_m,1)}_p)\in\mathscr{L}_p^m$ be a set of MOLS in reduced form. We restrict that $\alpha_i\neq (p-1)$ for $1\leq i\leq m$. Now, we replace each $\mathcal{L}^{(\alpha_i,1)}_p$ by $\mathcal{L}^{(\alpha'_i,\beta'_i)}_p$ with $(\alpha'_i,\beta'_i)=\omega_i(\alpha_i,1)$ and  $\omega_i=(\alpha_i + 1)^{-1}$ over $\mathbb{F}_p$, which is possible according to Lemma~\ref{replacing_latin_squares} without changing the stopping set distribution of the corresponding $\mathscr{L}_p^m$-TD LDPC code.
For every cell-coordinates $(x,y) \in \mathbb{F}_p$ of the Latin squares, we obtain a column of the parity-check matrix $H$. This column has $(m+2)p$ positions, enumerated from $0$, with 1-entries at the positions \[\{x,\: p+y,\:  2p+\mathcal{L}^{(\alpha'_1,\beta'_1)}_p[x,y],\hdots,(m+1)p+\mathcal{L}^{(\alpha'_m,\beta'_m)}_p[x,y]\}\] and 0-entries at all remaining positions.
The order of the columns of $H$ corresponds to the order of processing the cells which is specified by the following sequence of cell-coordinates: \[[x,0],[x+1,1],\hdots,[x+(p-1),p-1]\pmod p\] for $x=0,\hdots,p-1$.
This approach leads to a parity-check matrix with quasi-cyclic structure, and hence, encoding can be realized via simple feedback shift registers \cite{TowWel67} achieving linear complexity in block length.

\emph{Remark: }
By replacing the MOLS as described above, we obtain $p$ diagonals within each Latin square, given by the sets of cell-coordinates $\{[x+i,i]: i=0,\hdots,p-1\} \pmod p$ for $x=0,\hdots,p-1$. These cells contain all symbol numbers from $0$ to $p-1$ in ascending order. For each diagonal, we process the $p$ cells consecutively such that every step increases the row index, the column index and the symbol numbers within the Latin squares by one (modulo $p$), leading to circulant submatrices in the parity-check matrix.

\section{Simulation Results}\label{simulations}

In Table~\ref{table_stopsets}, we simulated the transmission and decoding for $2\cdot 10^8$ codewords of all possible $\mathcal{L}_{13}^2$-TD LDPC codes over the BEC with error probability $\epsilon=0.075$ and counted the occurrences of stopping sets separately for every size up to 12. 
Detections of sizes greater than 12 have not been listed, since they are not important for our considerations. 
For recovering the erasures, we used the algorithm described in \cite{Luby1997}, which is often referred to as the \emph{erasure decoder} or \emph{peeling decoder}.
Each code given in Table \ref{table_stopsets} is specified by a pair $(\alpha_1,\alpha_2)$, meaning that the code is based on the Latin squares $(\mathcal{L}_{13}^{(\alpha_1)},\mathcal{L}_{13}^{(\alpha_2)})\in\mathscr{L}_{13}^2$. Since $q=13$, these codes are $(4,13)$-regular LDPC codes of block length 169 and rate 0.71. Additionally, we give the \emph{bit error rate} (BER) for every code and list the constraints C1-C7 that are violated by the code parameters $(\alpha_1,\alpha_2)$. 
As we can see, the constraints C1-C3 should have highest precedence, since their violation leads to many stopping sets of size $8$. Moreover, by the satisfaction of C1-C3 no stopping sets of size 8 occurred. Also, no stopping sets of size 9 occurred. Hence, we can raise the stopping distance from 8 to 10. Furthermore, the satisfaction of constraint C4 reduces the number of \mbox{size-10} stopping sets significantly. The constraints C5-C7 are not violated by these codes. 
Note that for an increasingly smaller error probability~$\epsilon$ stopping set detections of higher sizes are successively vanishing and thus stopping sets of small sizes dominate the decoding performance at the error-floor region. Hence, it is sufficient to reduce the occurrences of small stopping sets for lowering the error-floors.

\begin{table}[!t]
\renewcommand{\arraystretch}{1.3}
\caption{Stopping Set Detections for all $\mathcal{L}_{13}^2$-TD LDPC Codes Over the BEC Channel With Erasure Probability $\epsilon=0.075$}\label{table_stopsets}
\begin{center}
\scalebox{0.8}{
\begin{tabular}{|c||c|c|c|c|c|c||c||c|}
\hline
\multirow{2}{*}{$\boldsymbol{(\alpha_1,\alpha_2)}$} & \multicolumn{6}{c||}{\textbf{Stopping Set Detections}} & \multirow{2}{*}{\textbf{Bit Error Rate}} & \multirow{2}{*}{\textbf{Violations}}\\
\cline{2-7} & $\boldsymbol{\leq 7}$ & $\boldsymbol{8}$ & $\boldsymbol{9}$ & $\boldsymbol{10}$ & $\boldsymbol{11}$ & $\boldsymbol{12}$ & & \\
\hline\hline
$(1,2)$ & 0 & 517 & 0 & 25 & 3 & 15 & $16.30\cdot 10^{-8}$ & C1 \\\hline
$(1,3)$ & 0 & 0 & 0 & 13 & 1 & 6 & $3.90 \cdot 10^{-8}$ & \\\hline
$(1,4)$ & 0 & 0 & 0 & 34 & 1 & 6 & $4.39 \cdot 10^{-8}$ & C4 \\\hline
$(1,5)$ & 0 & 0 & 0 & 10 & 2 & 6 & $3.74 \cdot 10^{-8}$ & \\\hline
$(1,6)$ & 0 & 0 & 0 & 7 & 3 & 6 & $3.46 \cdot 10^{-8}$ & \\\hline
$(1,7)$ & 0 & 521 & 0 & 29 & 1 & 6 & $15.95 \cdot 10^{-8}$ & C2\\\hline
$(1,8)$ & 0 & 0 & 0 & 4 & 1 & 5 & $3.91 \cdot 10^{-8}$ & \\\hline
$(1,9)$ & 0 & 0 & 0 & 15 & 3 & 7 & $3.76 \cdot 10^{-8}$ & \\\hline
$(1,10)$ & 0 & 0 & 0 & 41 & 1 & 10 & $4.15 \cdot 10^{-8}$ & C4\\\hline
$(1,11)$ & 0 & 0 & 0 & 9 & 1 & 7 & $3.80 \cdot 10^{-8}$ & \\\hline
$(1,12)$ & 0 & 522 & 0 & 31 & 5 & 9 & $16.49 \cdot 10^{-8}$ & C3\\\hline
\end{tabular}
}
\end{center}
\end{table}

\begin{figure*}[t!]
		\centerline{
			\subfloat{
				\includegraphics[scale = 0.67]{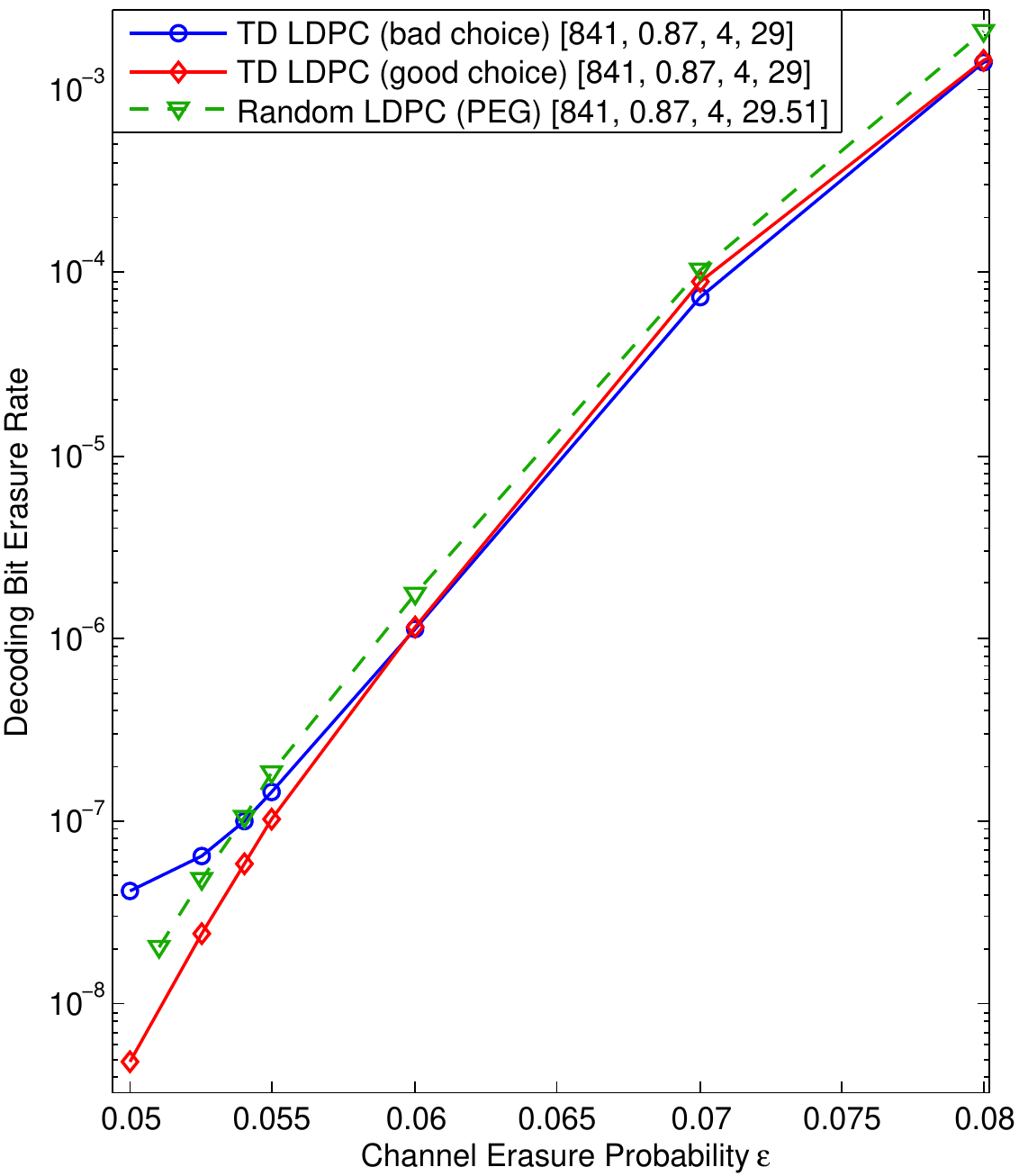}
			}
			\hspace{1cm}
			\vspace{0.3cm}
			\subfloat{
				\includegraphics[scale = 0.67]{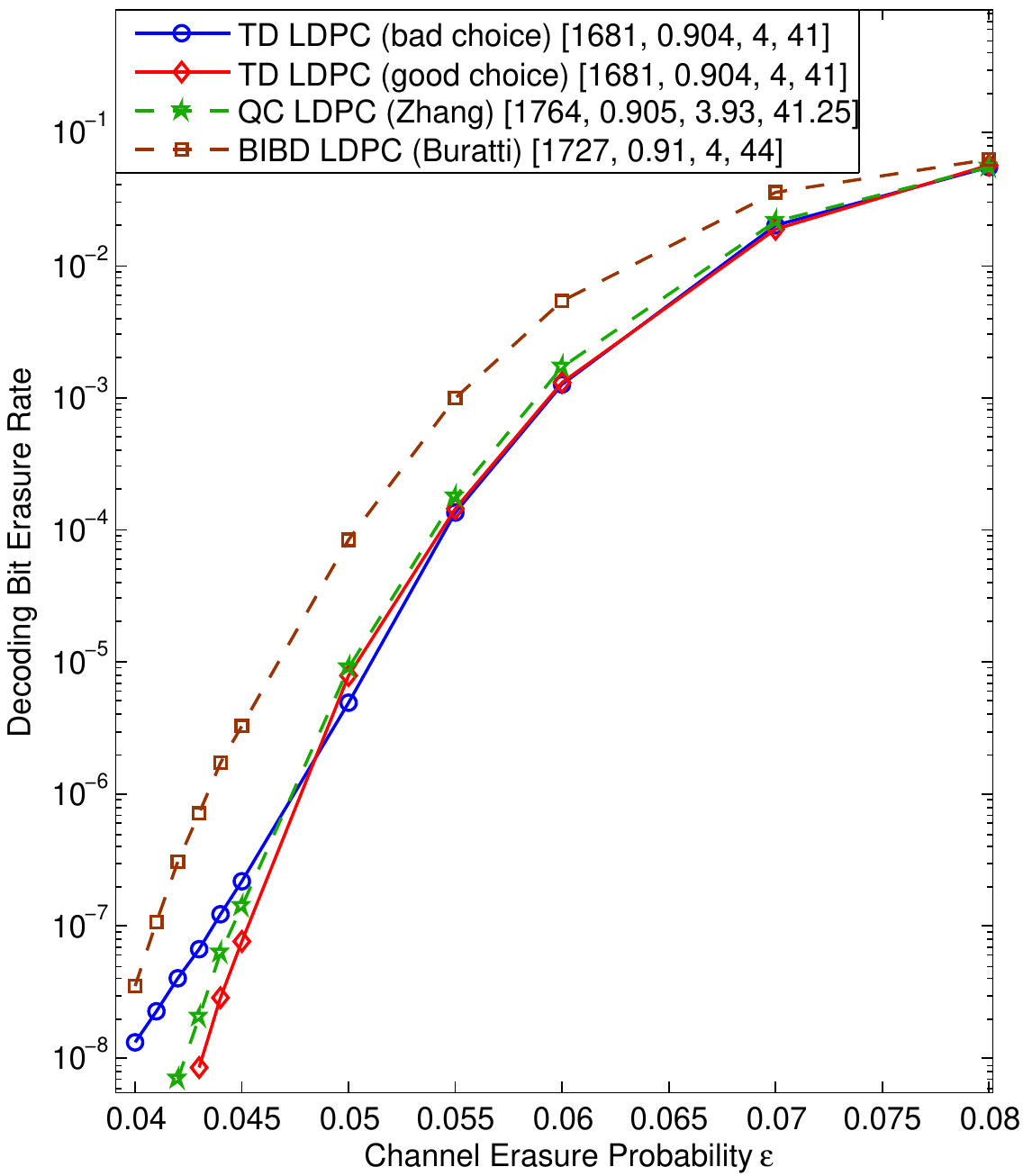}
			}
		}
		
		\centerline{
			\subfloat{
				\includegraphics[scale = 0.67]{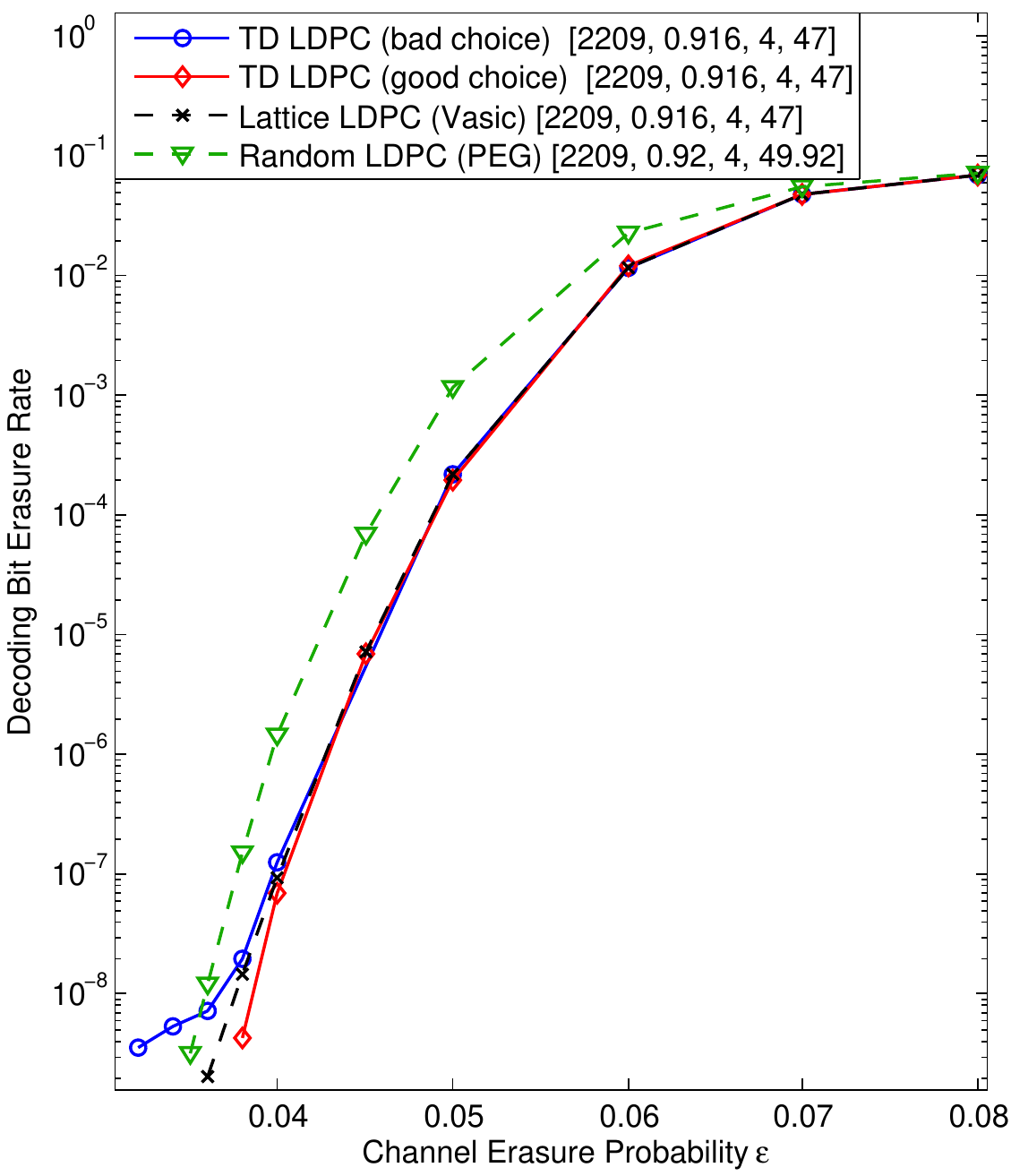}
			}
			\hspace{1cm}
			\vspace{0.3cm}
			\subfloat{
				\includegraphics[scale = 0.67]{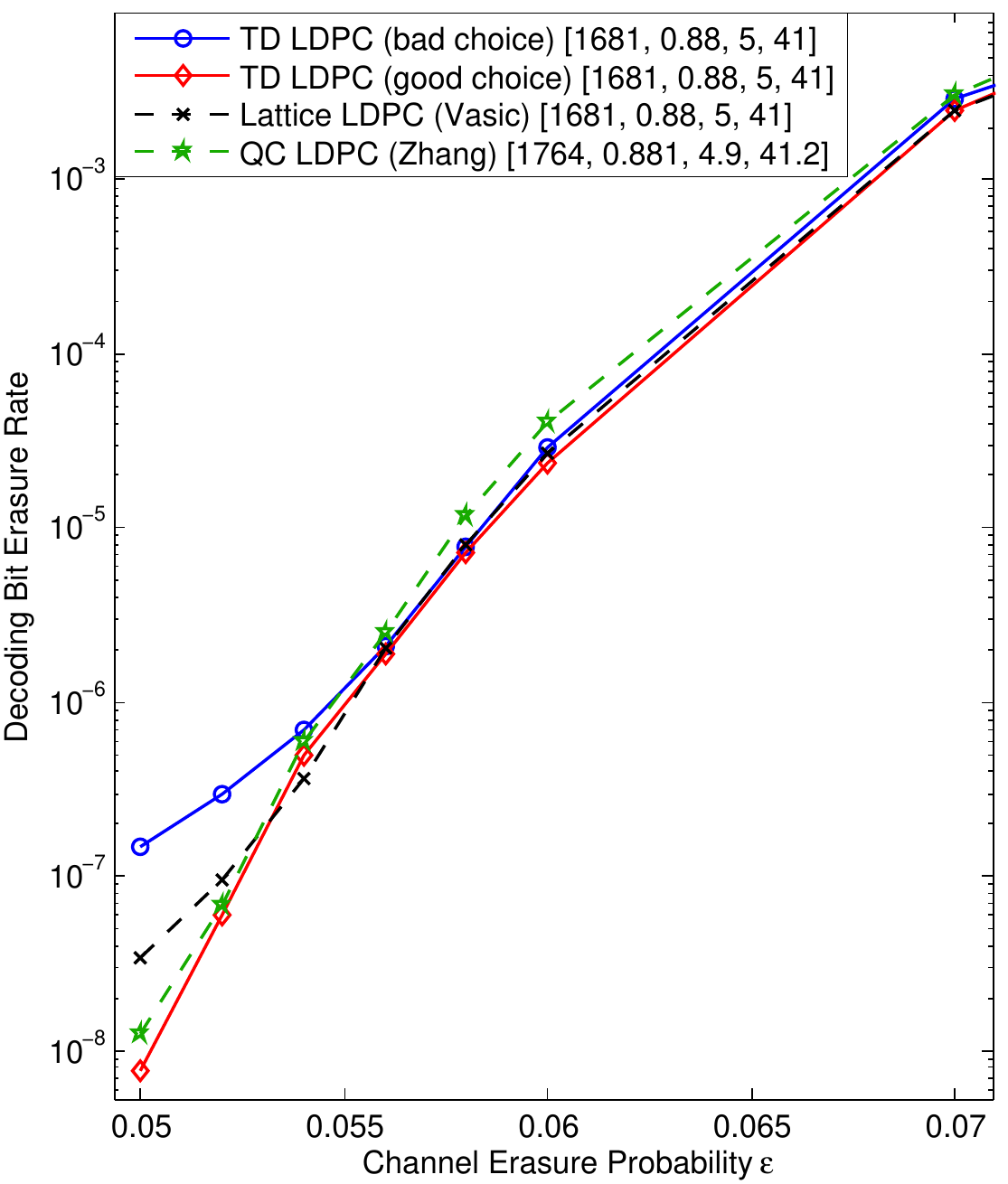}
			}
		}
	\caption{Decoding performance of various TD LDPC codes}
	\label{Simu}
\end{figure*}

For further experimental results on decoding performance (Fig.~\ref{Simu}), we again employed the peeling decoder over the BEC. A legend displays the following information in the respective order: code type, annotations in brackets, and a quadruple $[N, R, k, r]$ consisting of block length $N$, code rate $R$, column weight $k$ and row weight $r$. For irregular LDPC codes, the row and column weights are averaged. 
The first plot of Fig.~\ref{Simu} shows the decoding performance of two $\mathscr{L}_{29}^2$-TD LDPC codes based on the orthogonal Latin squares $(\mathcal{L}_{29}^{(1)},\mathcal{L}_{29}^{(2)})$ (bad choice) and $(\mathcal{L}_{29}^{(1)},\mathcal{L}_{29}^{(3)})$ (good choice), respectively, compared to a random regular LDPC code of nearly the same parameters constructed by the \emph{progressive edge-growth} (PEG) algorithm \cite{Hu2005}. The first $\mathscr{L}_{29}^2$-TD LDPC code suffers a high error-floor at approximately $\epsilon=0.06$, while the code with careful chosen scale factors shows a significantly better decoding performance at this region, even outperforming the randomly constructed code. 
The second plot illustrates the performance of $\mathscr{L}_{41}^2$-TD LDPC codes compared to a structural BIBD LDPC code based on the construction of Buratti (c.f.~\cite{Vasic04,GrunHub12}) and a quasi-cyclic (QC) LDPC code from \cite{Zhang2010}, which is also based on the concept of Latin squares. Again, by a suitable choice of the scale factors, we can improve the decoding performance in the error-floor region substantially, similar to the QC LDPC code by Zhang~et~al. \cite{Zhang2010}. Note that for a better comparison, we truncated the BIBD LDPC code in length without waiving the regularity of the code. 
The third plot shows that the $\mathscr{L}_{47}^2$-TD LDPC codes with proper chosen scale factors outperform a randomly PEG-constructed LDPC code of similar parameters and an LDPC code based on the Lattice construction from \cite{Vasic04} with exactly the same parameters. Although the Lattice codes are a subclass of the TD LDPC codes (cf. Subsection~\ref{related}), they seem to be disadvantaged due to their strict determination of the scale factors. Again, we observe an error-floor for the $\mathscr{L}_{47}^2$-TD LDPC code with the adverse choice of scale factors at an error probability of approximately $\epsilon=0.04$.
The last plot demonstrates that our results are also beneficial for the construction of $\mathscr{L}_{41}^3$-TD LDPC codes with column weight~5. The code based on the Latin squares $(\mathcal{L}_{41}^{(1)},\mathcal{L}_{41}^{(2)},\mathcal{L}_{41}^{(40)})$ is a bad choice with respect to the constraints C1-C7 and subsequently exhibits a high error-floor. The second code based on $(\mathcal{L}_{41}^{(1)},\mathcal{L}_{41}^{(3)},\mathcal{L}_{41}^{(9)})$ is a good choice, since its scale factors pairwise satisfy our constraints.

\section{Discussion and Related Work}\label{related}

LDPC codes based on transversal designs were first introduced in \cite{JohnWell2004} as a subclass of codes from \emph{partial geometries}. The code parameters of these LDPC codes are as in Subsection~\ref{codes_from_TDs}, but given in terms of partial geometries. The transversal designs arise from MOLS that are similarly constructed as in Lemma~\ref{simple_structured_MOLS}, but with only one scale factor $\alpha$ (and $\beta=1$). By contrast, the generalized construction of MOLS with two scale factors allows us to design quasi-cyclic TD LDPC codes with low-complexity encoding as described in Section~\ref{encoding}.
The minimum distance of any TD LDPC code is lower bounded in \cite{JohnWell2004} by $d_{min}\geq k+1$, where $k$ is the column weight of the code's parity-check matrix. In \cite{JohnsonDiss}, a method is presented  to generate transversal designs of block size $3$ without \emph{Pasch configurations}, leading to TD LDPC codes of column weight $3$ and minimum distance $d_{min} = 6$.
A general lower bound for the stopping distance of LDPC codes based on BIBD's of block size $k$ has been established in \cite{Ka03} by $s_{min}=k+1$. Note that transversal designs can be seen as partial BIBD's for which the same lower bound must be valid. 
The stopping distance of an $\mathscr{L}^1_q$-TD LDPC code of column weight 3 is $s_{min}=6$ for $\omega_q>2$ (Theorem~\ref{stopdist_weight3}) and the stopping distance of an $\mathscr{L}^2_q$-TD LDPC code of column weight $4$ is $s_{min}\geq 8$ for $\omega_q>3$ (Theorem~\ref{stopdist_weight4}). As we can see, both results exceed the general lower bound significantly. Furthermore, by optimizing the stopping set distribution of $\mathscr{L}^2_q$-TD LDPC codes, we can even raise the stopping distance from from 8 to 10.

Sets of $m$ MOLS can also be used to construct $q$-ary \emph{maximum distance separable (MDS)} codes \cite{Singleton64} ($q$ being the order of the constituent Latin squares) with block length $N=m+2$, rate $R=2/(m+2)$ and minimum distance $d=m+1$ \cite{ChangPark01}, achieving equality of the \emph{Singleton bound}. Although these codes arise from the same combinatorial entities as the TD LDPC codes considered in our paper, there are no obvious relations between both code classes. 

In~\cite{Vasic04}, special Latin squares are used to construct \emph{Steiner triple systems (STSs)}, which are certain subclasses of BIBDs, based on Bose's construction \cite{LindRod09}. The incidence matrix of any resulting STS can be considered as the parity-check matrix of a $(3,3q+1)$-regular LDPC code of block length $N=(6q+3)(6q+2)/6$ and rate $R\geq (3q-2)/3$, where $q$ is the order of an idempotent and commutative Latin square. If $2q+1\equiv 0 \mod 7$, the resulting STS is free of Pasch-configurations and thus the minimum distance of the LDPC code is $d_{min} = 6$. The results are extended in~\cite{LaenMil07} by a general investigation of the occurring substructures such as cycles, small stopping sets and small trapping sets, which have an influence on the decoding performances over the BEC and AWGN channels.
The existence of these substructures are linked to the existence of partial subrectangles in the underlying Latin squares, giving a simplified setting for  investigation. In the present paper, this idea is adapted for TD LDPC codes.

In \cite{Zhang2010}, quasi-cyclic LDPC codes based on Latin squares are constructed with a method called matrix dispersion. 
This technique can produce highly redundant parity-check matrices, leading to an excellent decoding performance at the cost of extra decoding computation. The paper provides an extensive analysis of the ranks of these parity-check matrices. However, error-floors has not been considered. 

A code class that is closely related to the presented TD LDPC codes is based on the geometric concept of rectangular integer lattices as proposed in \cite{Vasic04}. These codes can be seen as a subclass of TD LDPC codes. Given a prime $q$ and an integer $c$ with $3\leq c\leq q$, the $(q, c)$-parameterized code from the lattice construction (cf. \cite{Vasic04}) is equivalent to an TD LDPC code of column weight $c$ based on the set of MOLS $\{\mathcal{L}_q^{(\alpha_i, \beta_i)}: 1\leq i \leq c-2\}$ with scale factors $\alpha_i = q-i$ and $\beta_i = i+1$. 
After reducing these MOLS to the form $\mathcal{L}_q^{(\alpha'_i, 1)}$ with scale factors $\alpha'_i = (q-i)(i+1)^{-1}\bmod q$ according to Theorem \ref{reduced_form}, we note that the predetermined shift factors $\alpha'_i$ may not pairwise satisfy the constraints C1-C7.  For example, with $(q,c)=(5,4)$, we obtain a $\mathscr{L}_{5}^{2}$-TD LDPC code based on the MOLS $\{\mathcal{L}_4^{(4,2)},\mathcal{L}_4^{(3,3)}\}$ which can be reduced to $\{\mathcal{L}_4^{(2,1)},\mathcal{L}_4^{(1,1)}\}$. Here, the scale factors $\alpha_1=2$ and $\alpha_2=1$ violate constraint C2, leading to harmful stopping sets in the arising code.

\section{Conclusion}\label{conclusion}

In this paper, we have investigated stopping sets of LDPC codes from transversal designs based on mutually orthogonal Latin squares. We established a link between full configurations in transversal designs (corresponding to stopping sets in TD LDPC codes) and special substructures in the underlying Latin squares, which we have introduced as full-correlating subrectangles. This new concept allows a thorough examination of stopping sets in the resulting codes. 
The Latin squares considered in this paper possess a simple structure and depend only on a pair of constants, called the scale factors. Based on this class of Latin squares, we bounded the stopping distance of TD LDPC codes of column weight 3 and~4.
Furthermore, we derived constraints for the choice of well-matching scale factors, leading to improved stopping set distributions for the arising codes. These constraints are derived for column weight~4, but are also potentially beneficial for TD LDPC codes with higher column weights. 
As a consequence, our approach leads to codes with significantly lower error-floors over the binary erasure channel. We demonstrated the performance gain of our novel code design by extensive simulations for TD LDPC codes with small to moderate block lengths.

\section*{Acknowledgment}

The authors thank the anonymous referees for their careful reading and valuable insights that helped improving the presentation of the paper.

\newpage

\begin{biography}[{\includegraphics[width=1in, height=1.25in,clip,keepaspectratio]{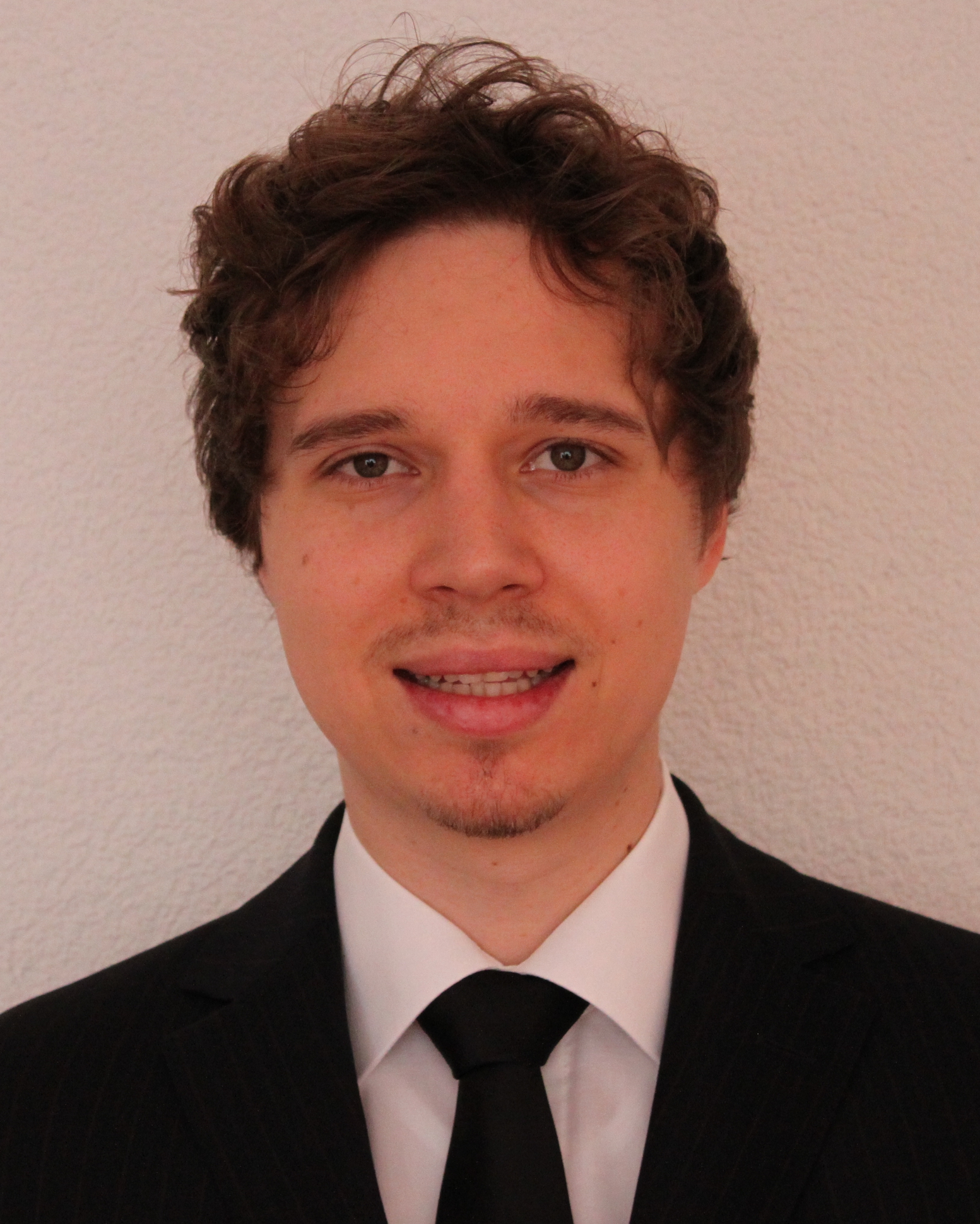}}]{Alexander Gruner} is a Ph.D. student in computer science at the Wilhelm Schickard Institute for Computer Science, University of T{\"u}bingen, Germany, where he is part of an interdisciplinary research training group in computer science and mathematics. He received the Diploma degree in computer science from the University of T{\"u}bingen in 2011. His research interests are in the field of coding and information theory with special emphasis on turbo-like codes, codes on graphs and iterative decoding.
\end{biography}

\begin{biography}[{\includegraphics[width=1in, height=1.25in,clip,keepaspectratio]{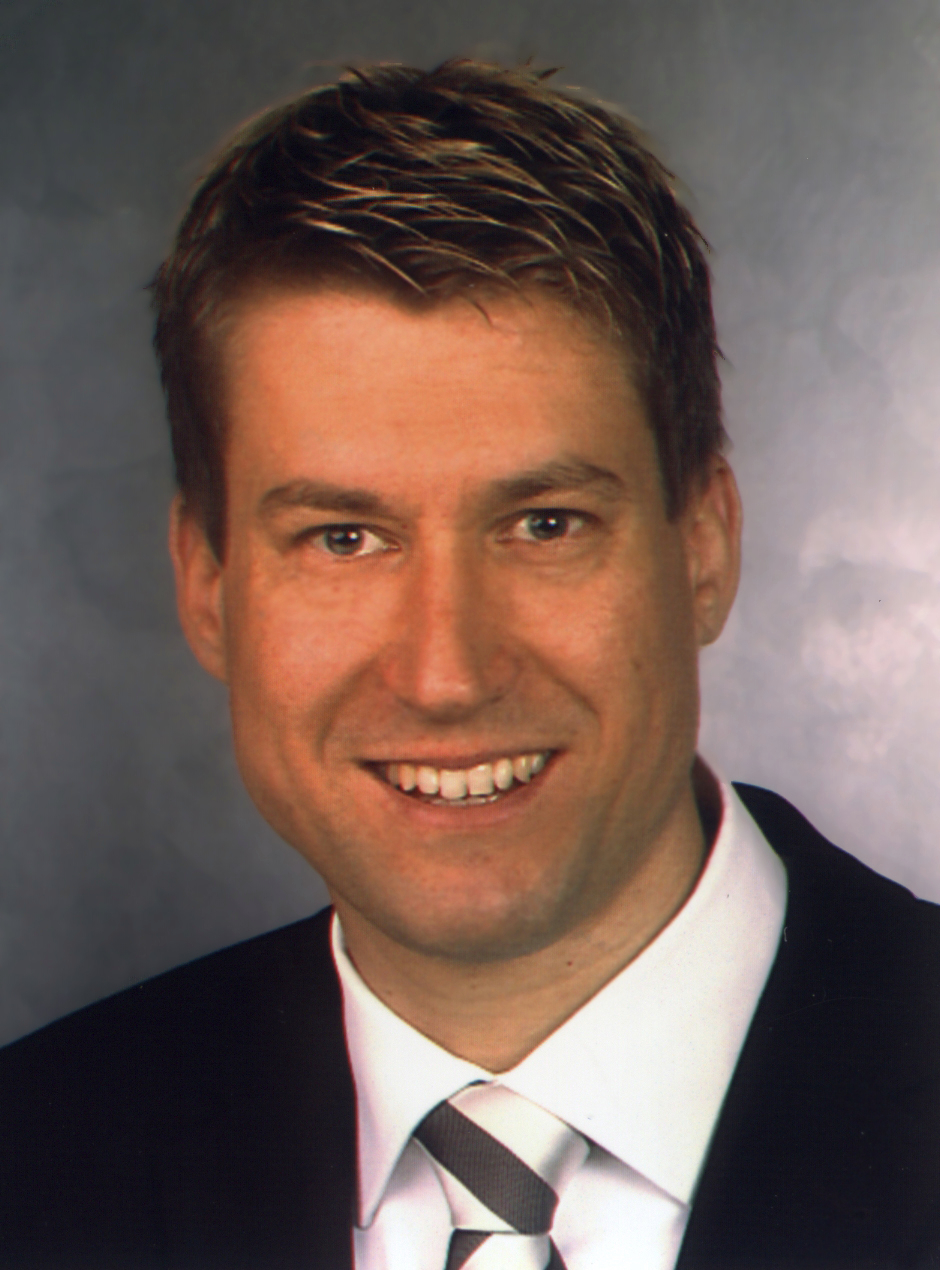}}]{Michael Huber} (M.'09) is currently a manager for data quality and business intelligence at Mercedes-Benz Bank AG, Stuttgart, and adjunct professor at the Wilhelm Schickard Institute for Computer Science, University of T\"ubingen, Germany. He was a Heisenberg fellow of the German Research Foundation (DFG) from 2008-12 and a visiting full professor in mathematics at Berlin Technical University from 2007-08. He obtained the Diploma, Ph.D. and Habilitation degrees in mathematics from the University of T\"ubingen in 1999, 2001 and 2006, respectively. He was awarded the 2008 Heinz Maier Leibnitz Prize by the DFG and the German Ministry of Education and Research (BMBF). He became a Fellow of the Institute of Combinatorics and Its Applications (ICA), Winnipeg, Canada, in 2009.

Prof. Huber's research interests are in the areas of coding and information theory, cryptography and information security, discrete mathematics and combinatorics, algorithms, and data analysis and visualization with applications to business and biological sciences. Among his publications in these areas are two books, Flag-transitive Steiner Designs (Birkh\"auser Verlag, Frontiers in Mathematics, 2009) and Combinatorial Designs for Authentication and Secrecy Codes (NOW Publishers, Foundations and Trends in Communications and Information Theory, 2010). He is a Co-Investigator of an interdisciplinary research training group in computer science and mathematics at the University of T\"ubingen.
\end{biography}

\enlargethispage{-4.3in}

\end{document}